\newtheorem{theorem}{Theorem}[section]
\newtheorem{proof}{Proof}[section]
\newtheorem{assumption}{Assumption}
\def\BibTeX{{\rm B\kern-.05em{\sc i\kern-.025em b}\kern-.08em
    T\kern-.1667em\lower.7ex\hbox{E}\kern-.125emX}}
\begin{document}
\title{A Class of Hierarchical Sliding Mode Control based on Extended Kalman filter for Quadrotor UAVs}
\author{Van Chung Nguyen and  Hung Manh La 
\thanks{This work was funded by the U.S. National Science Foundation (NSF) under grants NSF-CAREER: 1846513 and NSF-PFI-TT: 1919127, and the U.S. Department of Transportation, Office of the Assistant Secretary for Research and Technology (USDOT/OST-R) under Grant No. 69A3551747126 through INSPIRE University Transportation Center. The views, opinions, findings, and conclusions reflected in this publication are solely those of the authors and do not represent the official policy or position of 
 the NSF, the USDOT/OST-R, or any other entities.}
\thanks{The authors are with the Advanced Robotics and Automation (ARA) Lab, Department of Computer Science and Engineering, University of Nevada, Reno, NV 89557, USA.}
\thanks{$^*$ \url{https://github.com/aralab-unr/HSMC-EKF-for-Quadrotor-UAVs}}
\thanks{Corresponding author: Hung La, email: {\tt\small hla@unr.edu}}%
}
\markboth{}
{}
\maketitle
\begin{abstract}
This study introduces a novel methodology for controlling Quadrotor Unmanned Aerial Vehicles (UAVs), focusing on Hierarchical Sliding Mode Control (HSMC) strategies and an Extended Kalman Filter (EKF). Initially, an EKF is proposed to enhance robustness in estimating UAV states, thereby reducing the impact of measured noises and external disturbances. By locally linearizing UAV systems, the EKF can mitigate the disadvantages of the Kalman filter and reduce the computational cost of other nonlinear observers. Subsequently, in comparison to other related work in terms of stability and computational cost, the HSMC framework shows its outperformance in allowing the quadrotor UAVs to track the references. Three types of HSMC—Aggregated HSMC (AHSMC), Incremental HSMC (IHSMC), and Combining HSMC (CHSMC)—are investigated for their effectiveness in tracking reference trajectories. Moreover, the stability of the quadrotor UAVs is rigorously analyzed using the Lyapunov stability principle. Finally, experimental results and comparative analyses demonstrate the efficacy and feasibility of the proposed methodologies.
\end{abstract}
\def\abstractname{Note to Practitioners}
\begin{abstract}
This paper was motivated by the problem of stability in controlling Quadrotor UAVs under the influence of measurement noise and external disturbances. Existing approaches to control UAVs typically consider only the position loop as a single z-axis position. By proposing a novel structure, this paper focuses on presenting a robust approach to control Quadrotor UAVs concerning Lyapunov stability. Firstly, the EKF is conducted by linearizing the system around the current working status and estimating the system's states using current feedback data. Secondly, the attitude loop and position loop are handled by a group of PD-SMC controllers and HSMC, respectively. The reference roll and pitch angles are calculated by the PD, while the SMC ensures the tracking of the Quadrotor UAVs. The position controller is managed by three types of HSMC, each considering a differential structure with specific control laws to ensure the tracking stability of the UAVs. The proposed strategy, with its specific control laws, reduces the computational burden compared to other intelligent methods and observers, making it suitable for real-time applications in resource-constrained environments. The rigorous analysis based on the Lyapunov stability principle ensures that the proposed control strategies maintain the stability of Quadrotor UAVs under various scenarios. In future research, we plan to utilize the proposed Lyapunov function to develop adaptive controllers that address system uncertainty and fault tolerance.
\end{abstract}
\begin{IEEEkeywords}
Hierarchical sliding mode control, Extended Kalman filter, Quadrotor UAVs, Lyapunov stability.
\end{IEEEkeywords}
\section{INTRODUCTION}
The pervasive integration of Unmanned Aerial Vehicles (UAVs) across diverse sectors, including aerial photography, agriculture, disaster response, and surveillance, has created many new opportunities in modern industry. This technological advancement has revolutionized operational methodologies and attracted immense interest and engagement within both industrial and academic spheres. In addition, the utilization of UAVs in real life creates many advantages for humans ranging from the possibility of removing human pilots from danger and reducing the size and the cost of the UAVs \cite{besnard2012quadrotor}. Therefore, the design of controllers that allow the UAVs to perform accurately and effectively is necessary. Recently, numerous review papers \cite{han2018review,amin2016review} have highlighted the extensive research conducted in this domain, underscoring the diverse approaches explored. Specifically, the controller for quadrotor UAVs encompasses three primary categories: Linear controllers, Nonlinear controllers, and Intelligent controllers.
While linear controllers, such as the PID controller referenced in \cite{pham2018autonomous, bouaiss2020modeling}, or the LQR controller mentioned in \cite{elkhatem2022robust, rinaldi2013linear}, can assist UAVs in tracking the desired trajectory, they often suffer from low performance. Additionally, linear controllers's robustness is contingent on various factors when they are highly sensitive to external disturbances and model uncertainties. Conversely, intelligent controllers, including artificial neural networks as discussed in \cite{hay2023noise, sadhu2023board}, or reinforcement learning approaches as seen in \cite{koch2019reinforcement, wang2019autonomous}, provide better solutions for enhanced trajectory tracking and improved robustness for UAVs. In addition, some intelligent controllers do not require the model of the UAVs such as in \cite{muller2023robust}. Nevertheless, the application of intelligent controllers demands substantial data for learning and incurs high costs when applied to UAV systems. The comparison of the proposed method with the other related works is provided in Table~\ref{comparision method}.
\begin{table*}
\centering
\caption{QUADROTOR UAVS' CONTROLLERS COMPARISON. SYMBOL: ADDRESSED $\surd$ , NOT ADDRESSED $\square$}
{\begin{tabular}{c c c c c c c c} \hline
\bf{Method}&\bf{Tracking ability}&\bf{Lyapunov Stability} & \bf{Noises \& Disturbances}  &\bf{Computational burden} & \bf{Available sources}  \\ \hline
PID  \cite{pham2018autonomous, bouaiss2020modeling} & $\surd$ & $\square$ & $\square$ & Low & $\square$ \\ 
LQR \cite{elkhatem2022robust, rinaldi2013linear}  & $\surd$ & $\square$ & $\square$ & Low & $\square$ \\ 
Backstepping \cite{zhang2019robust,koksal2020backstepping}   & $\surd$ & $\surd$ & $\square$ & Low & $\square$  \\ 
SMC \cite{zheng2014second,shao2021adaptive}   & $\surd$ & $\surd$ & $\square$ & Low & $\square$  \\ 
PF \cite{ma2021artificial,woods2017novel}   & $\surd$ & $\square$ & $\square$ & Low & $\square$  \\ 
MPC \cite{zhang2021robust,cavanini2021model}  & $\surd$ & $\square$ & $\square$ & High & $\square$ \\ 
Intelligent control \cite{hay2023noise, sadhu2023board,muller2023robust} & $\surd$ & $\square$ & $\surd$ & High & $\square$  \\ 
Proposed method & $\surd$ & $\surd$ & $\surd$ & Low& $\surd$\\ 
\hline
\end{tabular}}
\label{comparision method}
\end{table*}	

To overcome the challenges of tracking performance, system stability, and cost reduction, nonlinear controllers have gained widespread application in quadrotor UAVs. Some nonlinear controllers such as the backstepping controller in \cite{zhang2019robust,koksal2020backstepping}, the model predictive control (MPC) in \cite{zhang2021robust,cavanini2021model}, the sliding mode control (SMC) in \cite{zheng2014second,shao2021adaptive}, or the potential field (PF) controller in \cite{ma2021artificial,woods2017novel}  have been implemented for the UAVs with demonstrated effectiveness. However, the efficacy of nonlinear control laws depends heavily on the accuracy of the UAV model. Since UAVs are underactuated systems \cite{zheng2014second}, ensuring stability while controlling them remains a challenge for developers. In \cite{cavanini2021model,didier2021robust}, the linear parameter-varying and the tube model predictive control are proposed to control the UAVs. These strategies exhibit enhanced tracking performance when compared to conventional approaches. Nonetheless, the integration of MPC into real-mode UAV operations requires substantial computational effort. The higher the computational complexity, the greater the implementation cost. Furthermore, it is noteworthy that these control methodologies have not addressed aspects concerning UAV stability and the recursive feasibility of MPC which need further discussion. In \cite{zheng2014second}, the second-order sliding mode control is proposed for the quadrotor UAV, this controller divides the UAV system into two parts: the actuated system ($z,\psi$) and the un-actuated system ($x,y,\phi,\theta$) and then proposed the controller for each system and the stability is carried out based on the Lyapunov function. Using the Lyapunov stability also provides a promising solution for model uncertainties such as the adaptive solutions in \cite{mofid2018adaptive,labbadi2020robust,lian2021adaptive}. In \cite{mofid2018adaptive}, the internal moments and UAV's mass are estimated using the finite-time adaptive sliding mode control. The fast nonsingular terminal sliding-mode control is used to estimate the internal moments and lumped uncertainty in \cite{labbadi2020robust,lian2021adaptive} respectively. Inheriting the advantages of the Lyapunov stability method, this paper introduces a novel approach for controlling UAVs: a hierarchical sliding mode control (HSMC) framework, along with an analysis of its stability. 

Another aspect of UAVs is observers,  the UAVs need to know precisely its position, velocity, altitude, etc. In terms of the UAVs, if these states are obtained accurately then the control of the aircraft can be achieved successfully \cite{hajiyev2013robust}. Since the UAVs work in an outdoor environment, various external factors such as wing forces or measure noises influence the UAVs' performance. Therefore, proposing an adequate solution for this problem is essential in controlling UAVs.
Several observers have been developed to estimate system states, such as high-gain observers in \cite{hussain2020extended,boss2020robust}, or extended state observers, as discussed in \cite{qi2021problems,mokhtari2016extended}, which estimate both system states and disturbances of UAVs to enhance performance. In this research, we focus on the challenge of estimating UAV's system states under the influence of measured noises and external disturbances. An effective approach to address this challenge is utilizing the Kalman Filter (KF). By modeling noises as Gaussian functions, KF iteratively updates observer gain to estimate system states. However, KF is designed for linear systems, posing limitations for highly nonlinear UAV dynamics. To overcome this constraint, alternatives such as the Unscented Kalman Filter (UKF) \cite{khamseh2019unscented, de2011uav} and the Particle Filter (PF) \cite{rigatos2012nonlinear} have been proposed to track references in UAVs. Although these filters perform a high quality for the UAVs in terms of tracking performance, the computational cost of these nonlinear methods is high, leading to the rise of the cost of UAV development and operation. Inspired by the method in \cite{kallapur2007application}, by locally linearizing UAV dynamics, we propose an Extended Kalman filter (EKF) for the UAVs. This filter can ensure robust state estimation under the influence of measured noises and external disturbances. Moreover, the computational cost of EKF can be reduced in comparison to other nonlinear methods that provide potential applications to real UAVs.

In this work, we propose an Extended Kalman Filter (EKF) and two loop controllers for quadrotor UAVs. The two loop controllers include position and attitude controllers. The position controller combines a PD controller and a single sliding mode controller (PD-SMC) to manage roll and yaw forces, while the attitude controller employs three types of hierarchical sliding mode control (HSMC) to ensure precise trajectory tracking of the UAVs. The main contributions of our proposed method compared to other related methods are outlined as follows:

\begin{itemize}
\item
We introduce an EKF for quadrotor UAVs, ensuring robust state estimation and enhancing controller efficacy. Compared to other nonlinear observers, the proposed EKF offers advantages in computational efficiency. By locally linearizing UAV dynamics around the working point, the EKF reduces computational demands, potentially lowering cost and size requirements for real-world UAV applications.
\item
We develop a novel hierarchical sliding mode control (HSMC) framework for quadrotor UAVs, investigating three distinct types: Aggregated HSMC (AHSMC), Incremental HSMC (IHSMC), and Combining HSMC (CHSMC). The stability of the quadrotor UAVs is assured through rigorous analysis based on Lyapunov stability principles.
\item
We provide simulations of the proposed method and comparisons to other related methods. The results demonstrate the effectiveness and feasibility of the control strategies.
\end{itemize}

The rest of the paper is presented as follows: Section II presents the Newton-Euler rigid body model of UAVs. The EKF is designed in Section III. Section IV consists of two parts: the first proposes three types of HSMC to control the UAVs, and the second analyzes the stability of the quadrotor UAVs based on Lyapunov stability principles. Furthermore, to demonstrate the feasibility and effectiveness of the proposed control, simulations and comparisons with existing methods are provided in Section V. The conclusion of the paper is presented in Section VI, and additional details can be found in the Appendix.

\section{NEWTON-EULER RIGID BODY MODEL}
The quadrotor UAVs are aligned with 4 propellers, each considering rotation velocity $\omega_i$, which creates forces and torques acting on the located center of mass (CoM). The model of the UAVs is shown in Figure~\ref{modelling}. 
To model the quadrotor UAVs, the following assumptions are made:
\begin{itemize}
    \item The UAV's structure is rigid and symmetrical.
    \item The angular motion of the UAVs is low amplitude.
    \item The propellers are supposed to be rigid; the thrust and drag of the propellers are proportional to the square of the propeller's speed.
\end{itemize}
\begin{figure}[!ht]
\centering
 {\resizebox*{8cm}{!}{\includegraphics{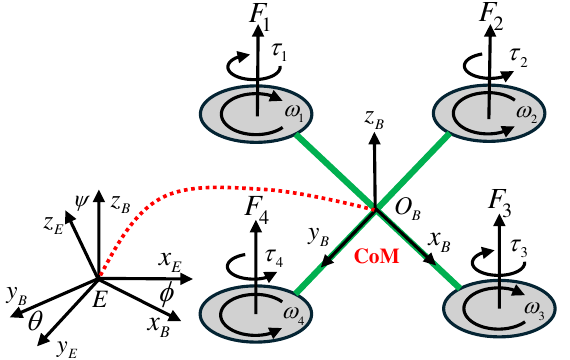}}}\hspace{5pt}
\caption{Quadrotor UAVs model.} 
\label{modelling}
\end{figure}
The linear velocity of the UAVs in the Earth frame is defined by: $\dot{\boldsymbol{\xi}} =[\dot{x},\dot{y},\dot{z}]^T$. The roll-pitch-yaw angle of the UAVs is defined by $\phi,\theta,\psi$. The Newton-Euler equation of the quadrotor UAVs motion is given as in \cite{bouadi2007modelling} by: 
\begin{align} \label{newdynamic}
\left\{\begin{array}{l}
m \ddot{\boldsymbol{\xi}}=\mathbf{F}_b+\mathbf{F}_{d}+\mathbf{F}_{g} \\
\mathbf{J}_M \dot{\boldsymbol{\Omega}}=-\boldsymbol{\Omega} \wedge \mathbf{J}_M \boldsymbol{\Omega}+\boldsymbol{\Gamma}_{f}-\boldsymbol{\Gamma}_{g}
\end{array}\right.
\end{align}
where $m$ is the mass of the UAVs; $\mathbf{F}_b, \mathbf{F}_{d}, \mathbf{F}_{g}$ are the resultant of the forces generated by the propeller rotors, the resultant of the drag forces, and the gravity forces, respectively. $\wedge$ denotes the cross product of two vectors. $\boldsymbol{\Omega}$ is the angular velocity in the body frame $O_B$ and is defined as:
\begin{align} \label{omegaframe}
\boldsymbol{\Omega}=\left(\begin{array}{ccc}
1 & 0 & -\sin \theta \\
0 & \cos \phi & \cos \theta \sin \phi \\
0 & -\sin \phi & \cos \phi \cos \theta
\end{array}\right)\left[\begin{array}{l}
\dot{\phi} \\
\dot{\theta} \\
\dot{\psi}
\end{array}\right].
\end{align}
In this research, we assume that the angular motion of the UAVs is low amplitude so that $\Omega$ can be assimilated to $[\dot{\phi}, \dot{\theta},  \dot{\psi}]^T$; and $\mathbf{J}_M$ is the inertial moments matrix. As mentioned before, the UAVs are symmetrical, and the inertial moment's matrix is defined as: 
\begin{align} \label{inertialmomentsmatrix}
    \mathbf{J}_M = \text{diag}(I_{xx},I_{yy},I_{zz}).
\end{align} Therefore, the element $-\boldsymbol{\Omega} \wedge \mathbf{J}_M \boldsymbol{\Omega}$ can be calculated as:
\begin{align} \label{elementtor}
    -\boldsymbol{\Omega} \wedge \mathbf{J}_M \boldsymbol{\Omega}  &= -\left[\begin{array}{l}
\dot{\phi} \\
\dot{\theta} \\
\dot{\psi}
\end{array}\right] \wedge  \left(\begin{array}{ccc}
I_{xx} & 0 & 0 \\
0 & I_{yy} & 0 \\
0 & 0 & I_{zz}
\end{array}\right) \left[\begin{array}{l}
\dot{\phi} \\
\dot{\theta} \\
\dot{\psi}
\end{array}\right]  \notag \\
&=  \left[\begin{array}{l}
\dot{\theta}\dot{\psi} (I_{yy}-I_{zz}) \\
\dot{\phi}\dot{\psi}(I_{zz}-I_{xx}) \\
\dot{\phi}\dot{\theta}(I_{xx}-I_{yy})
\end{array}\right].
\end{align}
The drag forces $\mathbf{F}_{d}$ is defined as:
\begin{align} \label{dragf}
\mathbf{F}_{d}=\left(\begin{array}{ccc}
-K_{dx} & 0 & 0 \\
0 & -K_{dy} & 0 \\
0 & 0 & -K_{dz}
\end{array}\right) \dot{\boldsymbol{\xi}}
\end{align}
The gravity forces: $\mathbf{F}_{g} = [0, 0 , -mg]^T $. The forces generated by the propeller rotor $\mathbf{F}_{b}$ can be calculated based on the total forces applied on the UAVs in the body frame by:
\begin{align} \label{ebforce}
\mathbf{F}_{b}=\mathbf{R}_{E}\left[\begin{array}{cc}
0  \\
0  \\
\sum_{i=1}^{4} F_i
\end{array}\right]
\end{align}
where $\mathbf{R}_E$ is the homogenous matrix transformation: 
\begin{align} \label{homogenousmatrix}
\mathbf{R}_{E} & =\left[\begin{array}{ccc}
c_{\psi} c_{\theta} & -s_{\psi} c_{\phi}+c_{\psi} s_{\theta} s_{\phi} & s_{\psi} s_{\phi}+c_{\psi} s_{\theta} c_{\phi} \\
s_{\psi} c_{\theta} & c_{\psi} c_{\phi}+s_{\psi} s_{\theta} s_{\phi} & -c_{\psi} s_{\phi}+s_{\psi} s_{\theta} c_{\phi} \\
-s_{\theta} & c_{\theta} s_{\phi} & c_{\theta} c_{\phi}
\end{array}\right] 
\end{align}
where $c_{\phi,\theta,\psi}=\cos{(\phi,\theta,\psi)}$, $s_{\phi,\theta,\psi}=\sin{(\phi,\theta,\psi)}$.\\
Moreover, $\boldsymbol{\Gamma}_{f}$ is the general moment generated by the propeller forces and torques: 
\begin{align} \label{generaltor}
\boldsymbol{\Gamma}_{f} & =\sum_{i=1}^{4} \boldsymbol{\tau}_{F_{i}}^{B}+\boldsymbol{\tau}_{{i}}^{B}=\sum_{i=1}^{4} \boldsymbol{p}_{O_{i}}^{B} \wedge \mathbf{F}_{i} + \sum_{i=1}^{4}{}^{B} \mathbf{R}_{O_{i}} \boldsymbol{\tau}_i =\left[\begin{array}{cc}
C_1 \\
C_2 \\
C_3
\end{array}\right]
\end{align}
where ${ }^{B} \mathbf{R}_{O_i}$ is the projection from propeller frame $O_{i}$ to $O_{B}$, $w_{i(i=1-4)}$ is the rotor velocity of the propeller $i$, and $\boldsymbol{p}_{O_{i}}^{B}$ is the vector of the body-frame to the origin of propeller at $O_i$. Substituting equation \eqref{elementtor},\eqref{dragf},\eqref{ebforce},\eqref{generaltor} to the Newtion-equation \eqref{newdynamic}, the full dynamic model of the UAVs is as follow:
\begin{align} \label{fullmodeluav}
m\ddot{x} & = (s_{\psi} s_{\phi}+c_{\psi} s_{\theta} c_{\phi}) F_z - K_{dx} \dot{x} \\ 
m\ddot{y} & =     (s_{\psi} s_{\theta} c_{\phi}-c_{\psi} s_{\phi}) F_z - K_{dy} \dot{y} \\
m\ddot{z} & =  c_{\theta} c_{\phi} F_z - K_{dz} \dot{z} -mg \\
I_{xx} \ddot{\phi} & = \dot{\theta}\dot{\psi} (I_{yy}-I_{zz}) + C_1 \\
I_{yy} \ddot{\theta} & =\dot{\psi}\dot{\phi}(I_{zz}-I_{xx}) + C_2 \\
I_{zz} \ddot{\psi} & = \dot{\phi}\dot{\theta}(I_{xx}-I_{yy}) + C_3 \label{fullmodeluavend}
\end{align}
where $F_z,  C_1,  C_2,  C_3$ are control inputs and are defined as:
\begin{align} \label{controlinput}
    \left[\begin{array}{c}
F_z \\
C_1 \\
C_2 \\
C_3
\end{array}\right] =  \left[\begin{array}{cccc}
k_t & k_t & k_t & k_t  \\
0 & -lk_t & 0 & lk_t  \\
-lk_t & 0 & lk_t & 0  \\
-k_d & k_d & -k_d & k_d  \\
\end{array}\right] \left[\begin{array}{c}
w_1^2 \\
w_2^2 \\
w_3^2 \\
w_4^2
\end{array}\right]
\end{align}
where $k_t$ and $k_d$ are the thrust and drag coefficients of the propellers, respectively, and $l$ is the length from the center of mass (CoM) to the propellers.
\section{EXTENDED KALMAN FILTER FOR QUADROTOR UAVS}
In this section, we present a strategy to estimate UAV's system states using EKF. The nonlinear model of the UAVs with the presence of noises can be written as:
 \begin{equation}
 \begin{cases}
 \dot{\mathbf{X}}(t)=\mathbf{F(X}(t),\mathbf{U}(t)) + \mathbf{W}(t) \\
 \mathbf{Y}(t)=\mathbf{C(X}(t))+ \mathbf{H}(t)
 \end{cases}
 \end{equation}
 where $\mathbf{X}(t)=[x,y,z,\phi,\theta,\psi,\dot{x},\dot{y}, \dot{z},\dot{\phi},\dot{\theta},\dot{\psi}]^T$ is the system state, $\mathbf{U}(t)=[F_x,C_1,C_2,C_3]^T$ is control input, $\mathbf{F(X}(t),\mathbf{U}(t)) $ denote the nonlinear system function which arecalculated from Eq.~\eqref{fullmodeluav} - Eq.~\eqref{fullmodeluavend}, $\mathbf{C(X}(t))=\mathbf{X}(t)$ is output function, and  $\mathbf{W}(t),\mathbf{H}(t)$ denote the total noises affected to the quadrotor UAVs. The noises functions $\mathbf{W}(t),\mathbf{H}(t)$ are considered as Gaussian function $N(0,R)$ and $N(0,Q)$, respectively.
 
 However, in order to apply this strategy, the UAVs will be linearized around the working point ${\mathbf{X}}_{k}$. The linear discrete time-varying dynamics of the UAVs is expressed as:
  \begin{equation}
 \begin{cases}
 \Delta{\mathbf{X}}_{k+1}=\mathbf{A}({\mathbf{X}}_{k},\mathbf{U}_{k}){\mathbf{X}}_{k}T_s + {\mathbf{B}}({\mathbf{X}}_{k},\mathbf{U}_{k}){\mathbf{U}}_{k}T_s+\mathbf{W}(k) \\
 \mathbf{Y}_{k+1}={\mathbf{X}}_{k+1}+ \mathbf{H}(k)
 \end{cases}
 \end{equation}
 where:
 \begin{align}
      \Delta{\mathbf{X}}_{k+1}&=\mathbf{X}_{k+1}-{\mathbf{X}}_{k}  \\
      \mathbf{A}({\mathbf{X}}_{k},\mathbf{U}_{k}) &= \frac{\partial\mathbf{F(X}(t),\mathbf{U}(t))}{\partial\mathbf{X}(t)}\vert_{\mathbf{X}_{k},\mathbf{U}_{k}} \label{Aline} \\
      \mathbf{B}({\mathbf{X}}_{k},\mathbf{U}_{k}) &= \frac{\partial\mathbf{F(X}(t),\mathbf{U}(t))}{\partial\mathbf{U}(t)}\vert_{\mathbf{X}_{k},\mathbf{U}_{k}} \label{Bline}.
 \end{align}
 $T_s$  is the sampling time, and $\mathbf{W}(k), \mathbf{H}(k)$ are noises at sample $k$. The details of matrix $\mathbf{A}({\mathbf{X}}_{k},\mathbf{U}_{k}), \mathbf{B}({\mathbf{X}}_{k},\mathbf{U}_{k}) $ can be seen in the Appendix.

 A state estimator based on EKF is presented as:
 \begin{align}
     \Delta\hat{\mathbf{X}}_{k+1}=&\mathbf{A}(\hat{\mathbf{X}}_{k},\mathbf{U}_{k})\hat{\mathbf{X}}_{k}T_s +\mathbf{B}(\hat{\mathbf{X}}_{k},\mathbf{U}_{k})\mathbf{U}_{k}T_s \notag \\
     &+ \mathbf{K}_{k+1}( \mathbf{Y}_{k+1}- \mathbf{C}_k\hat{\mathbf{X}}_{k})
 \end{align}
 and the filter gain is defined based on error covariance $\mathbf{P}_{k+1}$ as:
 \begin{align}
          \mathbf{K}_{k+1}= \mathbf{A}_k\mathbf{P}_{k+1}\mathbf{C}_k^T(\mathbf{C}_k\mathbf{P}_{k+1}\mathbf{C}_k^T+\mathbf{R}_k)^{-1}
 \end{align}
 where the covariance $\mathbf{P}_{k+1}$ is defined in each cycle by:
 \begin{align}
    \mathbf{P}_{k+1} = \mathbf{A}_{k} \mathbf{P}_k \mathbf{A}_k^T+ \mathbf{Q}_k- \mathbf{K}_k(\mathbf{C}_k\mathbf{P}_k\mathbf{C}_k^T+\mathbf{R}_k)\mathbf{K}_k^T
 \end{align}
where $\mathbf{A}_k=\mathbf{A}(\hat{\mathbf{X}}_{k},\mathbf{U}_{k})$, $\mathbf{C}_k$ is identity matrix due to $ \mathbf{Y}_{k+1}={\mathbf{X}}_{k+1}+ \mathbf{H}(k)$, and  $\mathbf{R}_k,\mathbf{Q}_k$ are the covariance noise matrices.

In this research, we consider the discrete-time EKF like in \cite{tonne2007stability,boutayeb1997convergence}, taking account of a priori and a posteriori states, the EKF becomes:\\
Linearization:
 \begin{align} \label{linearzing}
     \mathbf{A}(\hat{\mathbf{X}}_{k},\mathbf{U}_{k}) &= \frac{\partial\mathbf{F(X}(t),\mathbf{U}(t))}{\partial\mathbf{X}(t)}\vert_{\hat{\mathbf{X}}_{k},\mathbf{U}_{k}} \\
      \mathbf{B}(\hat{\mathbf{X}}_{k},\mathbf{U}_{k}) &= \frac{\partial\mathbf{F(X}(t),\mathbf{U}(t))}{\partial\mathbf{U}(t)}\vert_{\hat{\mathbf{X}}_{k},\mathbf{U}_{k}}
 \end{align}
Estimating: 
\begin{align}
      \Delta&\hat{\mathbf{X}}_{k+1}^-=\mathbf{A}(\hat{\mathbf{X}}_{k},\mathbf{U}_{k})\hat{\mathbf{X}}_{k}T_s +\mathbf{B}(\hat{\mathbf{X}}_{k},\mathbf{U}_{k})\mathbf{U}_{k}T_s \label{priorx} \\
      &\mathbf{P}_{k+1}^- = \mathbf{A}(\hat{\mathbf{X}}_{k},\mathbf{U}_{k}) \mathbf{P}_k \mathbf{A}^T(\hat{\mathbf{X}}_{k},\mathbf{U}_{k})+ \mathbf{Q}_k  \label{priorp}
 \end{align}
Updating:
   \begin{align}
      &\mathbf{K}_{k+1}= \mathbf{P}^{-} _{k+1}\mathbf{C}_k^T(\mathbf{C}_k\mathbf{P}^{-} _{k+1}\mathbf{C}_k^T+\mathbf{R}_k)^{-1} \label{kgain} \\
      &\mathbf{P}_{k+1} = (\mathbf{I}-\mathbf{K}_{k+1}\mathbf{C}_k)\mathbf{P}_{k+1}^-  \label{pafter} \\
      & \Delta\hat{\mathbf{X}}_{k+1}=\Delta \hat{\mathbf{X}}_{k+1}^- +\mathbf{K}_{k+1} (\mathbf{Y}_{k+1}- \mathbf{C}_k\hat{\mathbf{X}}_{k+1}^-) \notag \\
      & \ \ \ \ \ \ \  =\Delta \hat{\mathbf{X}}_{k+1}^- +\mathbf{K}_{k+1} (\mathbf{X}_{k+1}- \hat{\mathbf{X}}_{k+1}^-)  . \label{updatexk1}
 \end{align} 
The full diagram of EKF can be seen in Figure~\ref{EKFdiagram}.
\begin{figure}[!t]
\centering
{\resizebox*{7.5cm}{!}{\includegraphics{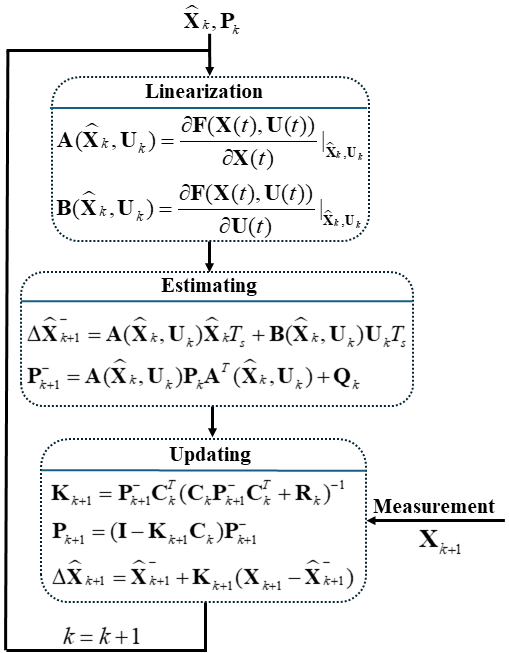}}}\hspace{5pt}
\caption{Block diagram of the Extended Kalman Filter.} 
\label{EKFdiagram}
\end{figure}
\begin{theorem} \label{theoremekf}
     Consider an EKF as in \eqref{linearzing}-\eqref{updatexk1}, and the following assumptions hold: \\
     1. There are positive numbers $a,c,p_1,p_2$ such that the following equations are satisfy with all $k \geq 0$:
     \begin{align}
        & \| \mathbf{A}_k \| \leq a \\
         & \| \mathbf{C}_k \| \leq c \\
         & p_1 \mathbf{I} \leq \| \mathbf{P}_k^- \| \leq p_2 \mathbf{I} \\
         &  p_1 \mathbf{I} \leq \| \mathbf{P}_k^+ \| \leq p_2 \mathbf{I}. 
     \end{align}
     2. $\mathbf{A}_k $ is nonsingular with all $k\geq 0$\\
     then the EKF is an exponential observer, and the observer error $\epsilon=\mathbf{X}_k-\hat{\mathbf{X}}_{k}$ will be bounded.
\end{theorem}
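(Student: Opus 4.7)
The plan is to establish the classical discrete-time EKF convergence/boundedness result in the spirit of Reif, Gunther, Yaz and Unbehauen, using a quadratic Lyapunov function built from the error covariance. First I would write out the error dynamics. Setting $\epsilon_k=\mathbf{X}_k-\hat{\mathbf{X}}_k$ and expanding $\mathbf{F}(\mathbf{X}_k,\mathbf{U}_k)$ by Taylor's theorem about $\hat{\mathbf{X}}_k$, the true state obeys
\begin{equation}
\mathbf{X}_{k+1}=\hat{\mathbf{X}}_k+\bigl(\mathbf{A}_k\hat{\mathbf{X}}_k+\mathbf{B}_k\mathbf{U}_k\bigr)T_s+\mathbf{A}_k\epsilon_k T_s+\boldsymbol{\varphi}(\mathbf{X}_k,\hat{\mathbf{X}}_k)+\mathbf{W}(k),
\end{equation}
where $\boldsymbol{\varphi}$ collects the higher-order remainder and, on any compact set, satisfies a quadratic bound $\|\boldsymbol{\varphi}(\mathbf{X}_k,\hat{\mathbf{X}}_k)\|\le\kappa\|\epsilon_k\|^2$. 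Subtracting the estimator update \eqref{priorx}--\eqref{updatexk1} and using $\mathbf{Y}_{k+1}=\mathbf{X}_{k+1}+\mathbf{H}(k)$ together with $\mathbf{C}_k=\mathbf{I}$, I would obtain the closed-form error recursion
\begin{equation}
\epsilon_{k+1}=(\mathbf{I}-\mathbf{K}_{k+1}\mathbf{C}_k)\mathbf{A}_k\epsilon_k T_s+(\mathbf{I}-\mathbf{K}_{k+1}\mathbf{C}_k)\boldsymbol{\varphi}(\mathbf{X}_k,\hat{\mathbf{X}}_k)+\boldsymbol{\nu}_k,
\end{equation}
with $\boldsymbol{\nu}_k$ a linear combination of the process noise $\mathbf{W}(k)$ and the measurement noise $\mathbf{H}(k)$ propagated through $\mathbf{K}_{k+1}$.

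Next I would introduce the Lyapunov candidate $V_k=\epsilon_k^{\top}(\mathbf{P}_k^{-})^{-1}\epsilon_k$. Assumption 1 immediately gives the two-sided sandwich $p_2^{-1}\|\epsilon_k\|^2\le V_k\le p_1^{-1}\|\epsilon_k\|^2$, so bounding $V_k$ is equivalent to bounding $\|\epsilon_k\|$. Using \eqref{kgain}--\eqref{pafter} and the standard matrix-inversion identity $(\mathbf{I}-\mathbf{K}_{k+1}\mathbf{C}_k)\mathbf{P}_{k+1}^{-}=\mathbf{P}_{k+1}^{+}$, together with \eqref{priorp}, I would substitute the error recursion into $V_{k+1}$ and, after algebraic simplification, arrive at an inequality of the form
\begin{equation}
V_{k+1}\le(1-\alpha)\,V_k+\beta_1\|\boldsymbol{\varphi}\|^2+\beta_2\|\boldsymbol{\nu}_k\|^2,
\end{equation}
for some $\alpha\in(0,1)$ and constants $\beta_1,\beta_2>0$ depending only on $a,c,p_1,p_2$ and $T_s$. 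The nonsingularity of $\mathbf{A}_k$ (Assumption 2) is what guarantees a strict contraction $\alpha>0$, by ensuring that the information contributed per step is uniformly positive and hence $(\mathbf{P}_k^{+})^{-1}-(1-\alpha)(\mathbf{P}_{k+1}^{-})^{-1}\succeq 0$ can be closed.

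Finally, the nonlinear remainder term is where the work lies. Because $\|\boldsymbol{\varphi}\|\le\kappa\|\epsilon_k\|^2\le\kappa p_2\,V_k\cdot\|\epsilon_k\|$, I would restrict the initial error to a ball $\|\epsilon_0\|\le r$ in which the quadratic term is dominated by $\tfrac{\alpha}{2}V_k$, yielding $V_{k+1}\le(1-\tfrac{\alpha}{2})V_k+\beta_2\|\boldsymbol{\nu}_k\|^2$. Iterating this inequality and using the bounded noise covariances $\mathbf{Q}_k,\mathbf{R}_k$ gives $V_k\le(1-\tfrac{\alpha}{2})^k V_0+\tfrac{2\beta_2}{\alpha}\sup_j\|\boldsymbol{\nu}_j\|^2$, which together with the sandwich bound certifies exponential decay of the noise-free error component and a uniform ultimate bound on $\|\epsilon_k\|$. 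The main obstacle is precisely this remainder step: closing the quadratic term requires an a priori region of attraction argument, so the proof delivers semi-global exponential observer behaviour rather than a global statement, and the constants $\alpha,\beta_1,\beta_2,r$ must all be tracked in terms of $a,c,p_1,p_2$ so that Assumption 1 is actually doing the work claimed in the statement.
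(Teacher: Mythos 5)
Your proposal is a correct reconstruction of the standard discrete-time EKF stability argument of Reif, G\"unther, Yaz and Unbehauen, but it is a genuinely different route from what the paper actually does: the paper's own proof is two sentences long, consisting of a check that the hypotheses are met for this system ($\mathbf{C}_k=\mathbf{I}$, and $\mathbf{A}_k$ bounded by inspection of the Appendix matrices) followed by a citation of \cite{tonne2007stability,reif1999extended} for the stability proof itself. In effect you have written out the argument that the paper delegates to its references: the error recursion with a Taylor remainder bounded quadratically on a compact set, the Lyapunov function $V_k=\epsilon_k^{\top}(\mathbf{P}_k^{-})^{-1}\epsilon_k$ sandwiched via Assumption~1, the contraction obtained through $(\mathbf{I}-\mathbf{K}_{k+1}\mathbf{C}_k)\mathbf{P}_{k+1}^{-}=\mathbf{P}_{k+1}^{+}$ and the nonsingularity of $\mathbf{A}_k$, and the region-of-attraction restriction that makes the result semi-global with an ultimate bound driven by the noise. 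That last observation is a genuine improvement in precision over the paper, which states boundedness without flagging the local/semi-global character of the guarantee; conversely, the paper's (terse) contribution is the system-specific verification step, which your write-up does not attempt --- note that for this UAV model $\mathbf{A}_{22}$ contains the angular rates, so $\|\mathbf{A}_k\|\leq a$ is itself only valid on a set where $\dot{\phi},\dot{\theta},\dot{\psi}$ stay bounded.

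One concrete slip to fix: with the paper's discretization $\Delta\mathbf{X}_{k+1}=\mathbf{A}_k\mathbf{X}_kT_s+\mathbf{B}_k\mathbf{U}_kT_s+\mathbf{W}(k)$, the one-step transition matrix is $\boldsymbol{\Phi}_k=\mathbf{I}+\mathbf{A}_kT_s$, not $\mathbf{A}_kT_s$. Writing $\mathbf{X}_k=\hat{\mathbf{X}}_k+\epsilon_k$ in the predictor, the identity part of the state propagates the error too, so the correct homogeneous term in your recursion is
\begin{equation}
\epsilon_{k+1}=(\mathbf{I}-\mathbf{K}_{k+1}\mathbf{C}_k)\left(\mathbf{I}+\mathbf{A}_kT_s\right)\epsilon_k+\cdots,
\end{equation}
whereas your version $(\mathbf{I}-\mathbf{K}_{k+1}\mathbf{C}_k)\mathbf{A}_k\epsilon_kT_s$ would (absurdly) annihilate the error in one step as $T_s\to 0$. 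The Lyapunov argument goes through unchanged after replacing $\mathbf{A}_kT_s$ by $\boldsymbol{\Phi}_k$ everywhere, with Assumption~2 ensuring $\boldsymbol{\Phi}_k$ is invertible for sufficiently small $T_s$; but as written the recursion is not the one your subsequent inequalities apply to.
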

\begin{proof}
Following the definition of $\mathbf{A}_k$ in the Appendix, and due to $\mathbf{C}_k= \mathbf{I}$, the assumption above will be held. The proof of the EKF's stability can be found in \cite{tonne2007stability,reif1999extended}.
\end{proof}

\section{HIERARCHICAL SLIDING MODE CONTROL BASED ON EXTENDED STATE OBSERVER}
The hierarchical method is pointed out in \cite{wang2004design} for a class of second-order underactuated systems. In order to apply the HSMC for the UAVs, the model of the quadrotor UAVs in \eqref{fullmodeluav}-\eqref{fullmodeluavend} can be rewritten that:
\begin{equation}
 \begin{cases} \label{redyna}
 \dot{x}_1=x_2 \\
 \dot{x}_2=(s_{x_{11}} s_{x_{7}}+c_{x_{11}} s_{x_{9}} c_{x_{7}}) F_z - K_{dx}{x}_2 \\
 \dot{x}_3=x_4 \\
 \dot{x}_4=(s_{x_{11}} s_{x_{9}} c_{x_{7}}-c_{x_{11}} s_{x_{7}}) F_z - K_{dy}{x}_4 \\
 \dot{x}_5=x_6 \\
 \dot{x}_6=c_{x_{9}} c_{x_{7}} F_z - K_{dz}x_6 -mg \\
 \dot{x}_7=x_8 \\
 \dot{x}_8= x_{10}x_{12} (I_{yy}-I_{zz}) + C_1 \\
 \dot{x}_9=x_{10} \\
 \dot{x}_{10}=x_{8}x_{12}(I_{zz}-I_{xx}) + C_2 \\
 \dot{x}_{11}=x_{12} \\
 \dot{x}_{12}=x_{8}x_{10}(I_{xx}-I_{yy}) + C_3 \\
 \end{cases}
 \end{equation}
where $x_1=x$, $x_2=\dot{x}$, $x_3=y$, $x_4=\dot{y}$ \, $x_5=z$, $x_6=\dot{z}$, $x_7=\phi$, $x_8=\dot{\phi}$, $x_9=\theta$, $x_{10}=\dot{\theta}$, $x_{11}=\psi$,$x_{12}=\dot{\psi}$, $s_{x_i}=sin(x_i)$, and $c_{x_i}=cos(x_i)$.
Or in this form:
\begin{equation} \label{statedyna}
 \begin{cases}
 \dot{x}_1=x_2 \\
 \dot{x}_2=f_x+b_xu_1 \\
 \dot{x}_3=x_4 \\
 \dot{x}_4=f_y+b_yu_1 \\
 \dot{x}_5=x_6 \\
 \dot{x}_6=f_z+bzu_1 \\
 \dot{x}_7=x_8 \\
 \dot{x}_8=f_{\phi}+b_{\phi}u_2 \\
 \dot{x}_9=x_{10} \\
 \dot{x}_{10}=f_{\theta}+b_{\theta}u_3\\
 \dot{x}_{11}=x_{12} \\
 \dot{x}_{12}=f_{\psi}+b_{\psi}u_4\\
 \end{cases}
 \end{equation}
where $[u_1,u_2,u_3,u_4]=[F_z,C_1,C_2,C_3]$, and $f_{i},b_{i}(i=x,y,z,\phi,\theta,\psi)$ can be identified based on \eqref{redyna}, respectively.
\begin{assumption} \label{assumerror}
Based on Theorem~\ref{theoremekf}, it is assumed that during the controller design process for the quadrotor UAVs, the estimation error will be bounded and converged to zero. Consequently, the sliding surface and the Lyapunov function will disregard these values. The controller design and its stability are structured like state feedback.
\end{assumption}
\subsection{PD-SMC attitude controller for UAVs}
In this section, the PD-SMC attitude controller (PD-SMC-AC) is proposed to create the desired roll and pitch angles for the UAVs. For the attitude controller,  we consider  $\phi,\theta,\psi$ (the roll, pitch, yaw angles) of the quadrotor UAVs as the class of actuated states, which can be controlled by $u_2, u_3, u_4$, respectively.

Considering the sliding surface as:
\begin{align}
    s_\phi= c_{\phi}e_7+e_8 
\end{align}
where $c_{\phi}$ is a positive constant, and $e_7=\hat{\phi}-\phi_r,e_8=\dot{\hat{\phi}}-\dot{\phi}_r$, and $\phi_r$ is the desired roll angle.\\
The Lyapunov function is chosen as:
\begin{align}
    V_{\phi}=\frac{1}{2} s_\phi^2 
\end{align}
and its derivative:
\begin{align} \label{lyasphi}
    \dot{V}_{\phi}&=s_\phi\dot{s}_\phi \notag \\
    &=s_\phi (c_{\phi}(\dot{\hat{\phi}}-\dot{\phi}_r)+ \ddot{\phi}-\ddot{\phi}_r-\ddot{\epsilon}_{\phi}) \notag \\
    &\approx s_\phi (c_{\phi}(\dot{\hat{\phi}}-\dot{\phi}_r)+ \ddot{\phi}-\ddot{\phi}_r)
\end{align}
where $\epsilon_{\phi}$ is the observed error of roll angle. Therefore, the control law $u_2$ is proposed as:
\begin{align} \label{controlphi}
    u_2&=\frac{1}{b_\phi}[-c_{\phi}\hat{\phi}-(c_{\phi}+1)\dot{\hat{\phi}}+c_{\phi}\phi_r +(c_{\phi}+1)\dot{\phi}_r  \notag \\
     & \ \ \ \ \ \ -f_{\phi}-K_{\phi}\text{sat}(s_\phi)+\ddot{\phi}_r]
\end{align}
where $K_{\phi}$ is a positive number, $\text{sat}(s_\phi)$ is the saturation sign function, which is defined by:
\begin{align} \label{satsign}
    \text{sat}(s_\phi)=
 \begin{cases} 
  -1 \text{ if } s_\phi \leq -1 \\
  s_\phi \text{ if } -1 < s_\phi < 1 \\
  1 \text{ if } s_\phi \geq 1. \\
 \end{cases}
\end{align}
Substituting the control law \eqref{controlphi}, and the dynamic equation \eqref{statedyna} into \eqref{lyasphi} with the note that the estimation error will converge to 0, the derivative of the Lyapunov candidate becomes:
\begin{align}
    \dot{V}_{\phi}&=s_\phi [-c_{\phi}(\hat{{\phi}}-{\phi}_r)- (\dot{\hat{\phi}}-\dot{\phi}_r) -K_{\phi}\text{sat}(s_\phi)] \notag \\
    & = -s_\phi^2-s_{\phi} K_{\phi}\text{sat}(s_\phi) .
\end{align}
Hence, the system is stable. Moreover, applying Barbalat’s lemma \cite{slotine1991applied}, the sliding surface will converge to zero, which means  $c_{\phi}e_7+e_8  \rightarrow 0$. Solving the differential equation we have the control error ${\phi}-\phi_r \rightarrow 0$, and then the UAVs can track references.
For the pitch and yaw angles, let's define the sliding surface as:
\begin{align}
    & s_\theta= c_{\theta}e_9+e_{10}  \\
    & s_\psi= c_{\psi}e_{11}+e_{12} .
\end{align}
Defining the Lyapunov function as: 
\begin{align}
    & V_{\theta}=\frac{1}{2} s_\theta^2  \\
    & V_{\psi}=\frac{1}{2} s_\psi^2.
\end{align}
Using the same technique with the roll angle, the control input $u_3,u_4$ can be obtained by:
\begin{align} \label{controltheta}
u_3&=\frac{1}{b_\theta}[-c_{\theta}\hat{\theta}-(c_{\theta}+1)\dot{\hat{\theta}}+c_{\theta}\theta_r +(c_{\theta}+1)\dot{\theta}_r  \notag \\
     & \ \ \ \ \ \ -f_{\theta}-K_{\theta}\text{sat}(s_\theta)+\ddot{\theta}_r] \\
u_4&=\frac{1}{b_\psi}[-c_{\psi}\hat{\psi}-(c_{\psi}+1)\dot{\hat{\psi}}+c_{\psi}\psi_r +(c_{\psi}+1)\dot{\psi}_r  \notag \\
     & \ \ \ \ \ \ -f_{\psi}-K_{\psi}\text{sat}(s_\psi)+\ddot{\psi}_r]   \label{controlpsi}
\end{align}
then the SMC can help the UAVs track the roll-pitch-yaw references.\\
However, for the reference roll and pitch angle, we consider a group of PD controllers for the tracking along the $x$-axis and the $y$-axis. To move along the $x$-direction the quadrotor UAVs are required to roll around the $y$-axis and to move along the $y$-direction the quadrotor UAVs are required to roll around the $x$-axis.
The roll angle is approximated as the first-order system of $(y_r – y)$, where $y_r$ is the desired trajectory.
Therefore, the PD controller to create the roll and the yaw reference is defined as: \\
Roll controller:
\begin{align}
    \phi_r=k_{py}(y_r-\hat{y})-k_{dy}(\dot{y}_r-\dot{\hat{y}})
\end{align}
Pitch controller:
\begin{align}
    \theta_r=k_{px}(\hat{x}-x_r)-k_{dx}(\dot{\hat{x}}-\dot{x}_r)
\end{align}
where $k_{px},k_{dx},k_{py},k_{dy}$ are the controller parameters, respectively. 
\subsection{HSMC position controller for UAVs}
In this part, we propose three types of HSMC: Aggregated HSMC (AHSMC), Incremental HSMC (IHSMC), and Combining HSMC (CHSMC). Inspired by the idea in \cite{qian2016hierarchical}, the HSMC is designed for three class variables $x,y,z$ with one control input $F_z$. Its state space expression is represented by:
\begin{align}
    \begin{cases}
        \dot{x}_{2i-1}=x_{2i} \\
        \dot{x}_{2i}=f_i+b_iu
    \end{cases}
\end{align}
where $i=1,2,3$, and $u=F_z$.
\subsubsection{Aggregated HSMC}
\begin{figure}[!ht]
\centering
\begin{tikzpicture}
\filldraw[black, very thick] (0,0) rectangle (1.5,0.1);
\filldraw[black, very thick] (2,0) rectangle (3.5,0.1);
\draw[->] (0.25,-0.95) node[below]{$e_1$} -- (0.25,-0.05);
\draw[->] (1.25,-0.95) node[below]{$e_2$} -- (1.25,-0.05);
\draw[->] (2.25,-0.95) node[below]{$e_3$} -- (2.25,-0.05);
\draw[->] (3.25,-0.95) node[below]{$e_4$} -- (3.25,-0.05);
\draw[->] (0.75,0.05) -- (0.75,1.05);
\draw (0.75,0.55) node[anchor=west]{$S_1$};
\draw[->] (2.75,0.05) -- (2.75,1.05);
\draw (2.75,0.55) node[anchor=west]{$s_2$};
\filldraw[black, very thick] (0.5,1.1) rectangle (3,1.2);
\draw[->] (1.75,1.15) -- (1.75,2.15);
\draw (1.75,1.65) node[anchor=west]{$S_2$};
\filldraw[black, very thick] (4,1.1) rectangle (5.5,1.2);
\draw[->] (4.25,0.1) node[below]{$e_5$} -- (4.25,1.05);
\draw[->] (5.25,0.1) node[below]{$e_6$} -- (5.25,1.05);
\draw[->] (4.75,1.15) -- (4.75,2.15);
\draw (4.75,1.65) node[anchor=west]{$s_3$};
\filldraw[black, very thick] (1.5,2.2) rectangle (5,2.3);
\draw[->] (3.25,2.3) -- (3.25,3.3)  node[above]{$S_3$} ;
\end{tikzpicture}
\caption{Structure of AHSMC.} 
\label{ahsmcstructure}
\end{figure}
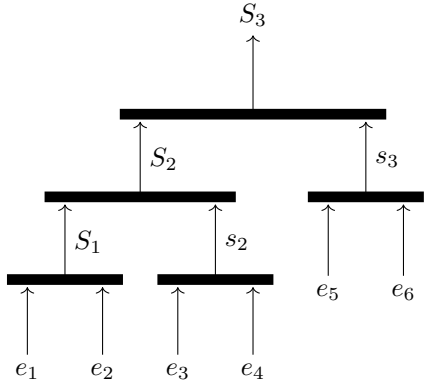
The basic idea of AHSMC is to pair two variable states as a lower layer, and the higher layer considering the lower layers. The structure of AHSMC is depicted in Figure~\ref{ahsmcstructure}.
The lower layer is defined as follows:
\begin{align}
    &s_1=c_1e_1+e_2 \\
    &s_2=c_2e_3+e_4 \\
    &s_3=c_3e_5+e_6 
\end{align}
where $c_1,c_2,c_3$ are positive numbers, $e_1=\hat{x}-x_r$, $e_2=\dot{\hat{x}}-\dot{x}_r$, $e_3=\hat{y}-y_r$, $e_4=\dot{\hat{y}}-\dot{y}_r$, $e_5=\hat{z}-z_r$, $e_6=\dot{\hat{z}}-\dot{z}_r$ are control errors. \\
The higher layers are defined as:
\begin{align}
    &S_1=s_1 \\
    &S_2=\lambda_1S_1+ s_2 \\
    &S_3=\lambda_2S_2+s_3
\end{align}
where $\lambda_1,\lambda_2$ are control parameter. The equivalent controller that makes the sliding surface's derivative equal to zero is calculated based on:
\begin{align} \label{deris1}
    \dot{s}_1&=c_1\dot{e}_1+\dot{e}_2 = c_1(\dot{\hat{x}}-\dot{x}_r)+(\ddot{\hat{x}}-\ddot{{x}}_r) \notag \\
    &=  c_1(\dot{\hat{x}}-\dot{x}_r)+(f_{\hat{x}}+b_{\hat{x}}u-\ddot{x}_r).
\end{align}
Therefore the equivalent control law for sliding surface $s_1$ is chosen as:
\begin{align}
    u_{eqx}=\frac{-c_1(\dot{\hat{x}}-\dot{x}_r)+\ddot{x}_r-f_{\hat{x}}}{b_{\hat{x}}}.
\end{align}
Similar to the equivalent control law of sliding surface $s_2,s_3$:
\begin{align}
    u_{eqy}&=\frac{-c_2(\dot{\hat{y}}-\dot{y}_r)+\ddot{y}_r-f_{\hat{y}}}{b_{\hat{y}}}\\
     u_{eqz}&=\frac{-c_3(\dot{\hat{z}}-\dot{z}_r)+\ddot{z}_r-f_{\hat{z}}}{b_{\hat{z}}}.
\end{align}
The switching control law of the AHSMC is proposed as:
\begin{align}
    u_{sw}=&-\frac{\lambda_1\lambda_2b_{\hat{x}}(u_{eqy}+u_{eqz})+\lambda_2b_{\hat{y}}(u_{eqx}+u_{eqz})  }{\lambda_2\lambda_1b_{\hat{x}}+\lambda_2b_{\hat{y}}+b_{\hat{z}}}\notag \\
    &- \frac{b_{\hat{z}}(u_{eqx}+u_{eqy})+K_a S_3+\eta \text{sat}(S_3)}{\lambda_2\lambda_1b_{\hat{x}}+\lambda_2b_{\hat{y}}+b_{\hat{z}}}
\end{align}
where $K_a,\eta$ are positive numbers, and $\text{sat}(S_3)$ is the saturation sign function, which can be defined as in \eqref{satsign}.
\begin{theorem}
    In the view of Assumption~\ref{assumerror}, considering the quadrotor UAVs with the state space \eqref{statedyna}, if the AHSMC controller is proposed as:
    \begin{align} \label{controlahsmc}
        u= u_{eqx}+u_{eqy}+u_{eqz}+u_{sw}
    \end{align}
    then the sliding surface $S_3$  is asymptotically stable.
\end{theorem}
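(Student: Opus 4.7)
The plan is to use a Lyapunov argument built around the top-level sliding surface $S_3$. I would set $V=\tfrac{1}{2}S_3^2$ and show that, under the composite control \eqref{controlahsmc}, the derivative $\dot V$ is negative definite in $S_3$. Asymptotic stability then follows exactly as in the PD--SMC derivation around \eqref{lyasphi}, with Barbalat's lemma as a safety net if $\dot V$ is only shown to be negative semidefinite on a thin set.

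The first step is a clean expansion of $\dot S_3$ through the hierarchy. From $S_3=\lambda_2(\lambda_1 s_1+s_2)+s_3$ I get $\dot S_3=\lambda_1\lambda_2\dot s_1+\lambda_2\dot s_2+\dot s_3$. Using the form of $\dot s_1$ already computed in \eqref{deris1} together with the defining identity $b_{\hat x}u_{eqx}=-c_1(\dot{\hat x}-\dot x_r)+\ddot x_r-f_{\hat x}$, each component collapses to the compact form $\dot s_1=b_{\hat x}(u-u_{eqx})$, and analogously $\dot s_2=b_{\hat y}(u-u_{eqy})$, $\dot s_3=b_{\hat z}(u-u_{eqz})$. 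This is the key simplification that makes the remaining algebra tractable, since it reduces the problem to a single scalar input acting on three parallel error channels.

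Second, I would substitute $u=u_{eqx}+u_{eqy}+u_{eqz}+u_{sw}$, which gives $u-u_{eqx}=u_{eqy}+u_{eqz}+u_{sw}$ and similarly for the other two components. Collecting terms, the coefficient of $u_{sw}$ in $\dot S_3$ is exactly $\lambda_1\lambda_2 b_{\hat x}+\lambda_2 b_{\hat y}+b_{\hat z}$, which is precisely the denominator of $u_{sw}$. By construction, the numerator of $u_{sw}$ contains the three equivalent-control cross-terms $\lambda_1\lambda_2 b_{\hat x}(u_{eqy}+u_{eqz})+\lambda_2 b_{\hat y}(u_{eqx}+u_{eqz})+b_{\hat z}(u_{eqx}+u_{eqy})$ with matching signs, so these cancel in $\dot S_3$ and leave only $\dot S_3=-K_a S_3-\eta\,\text{sat}(S_3)$. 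Consequently $\dot V=-K_a S_3^2-\eta S_3\,\text{sat}(S_3)\leq 0$, strictly negative off the origin because $S_3\,\text{sat}(S_3)\geq 0$ everywhere and $S_3^2>0$ for $S_3\neq 0$; asymptotic stability of $S_3$ follows.

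I expect the main obstacle to be bookkeeping rather than mathematical depth: tracking the three cross-terms generated by the equivalent controls and confirming that they line up one-for-one with the terms engineered into the numerator of $u_{sw}$. A secondary technical point worth addressing explicitly is well-posedness of $u_{sw}$, i.e.\ that $\lambda_1\lambda_2 b_{\hat x}+\lambda_2 b_{\hat y}+b_{\hat z}$ stays bounded away from zero along trajectories; near hover $b_{\hat z}=c_\theta c_\phi/m$ dominates and is nonzero, so with positive $\lambda_1,\lambda_2$ the denominator is controlled inside the flight envelope, but this should be stated as a standing condition. Finally, Assumption~\ref{assumerror} is what lets me treat $\hat x,\hat y,\hat z$ as the true states throughout the Lyapunov calculation, which is exactly what makes the cancellations above exact rather than approximate.
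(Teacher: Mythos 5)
Your proposal is correct and follows essentially the same route as the paper: the identical Lyapunov candidate $V=\tfrac{1}{2}S_3^2$, the same expansion $\dot S_3=\lambda_1\lambda_2\dot s_1+\lambda_2\dot s_2+\dot s_3$ with $\dot s_i=b_i(u-u_{eq,i})$, and the same cancellation of the cross-terms against the numerator of $u_{sw}$ to reach $\dot V_a=-K_aS_3^2-\eta S_3\,\mathrm{sat}(S_3)$. Your added remark that $\lambda_1\lambda_2 b_{\hat x}+\lambda_2 b_{\hat y}+b_{\hat z}$ must stay bounded away from zero is a genuine well-posedness condition the paper leaves implicit, and is worth stating.
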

\begin{proof}
    According to Lyapunov's stability, considering the Lyapunov candidate as:
    \begin{align}
        V_{a}=\frac{1}{2}S_3^2.
    \end{align}
     Differentiating $V_{a}$ with respect to time $t$, the derivative of $V_{a}$ yield is expressed as:
     \begin{align} \label{dotva}
         \dot{V}_a=S_3\dot{S}_3=S_3(\lambda_1\lambda_2\dot{s}_1+\lambda_2\dot{s}_2+\dot{s}_3).
     \end{align}
     Substituting the control law \eqref{controlahsmc} ,\eqref{deris1} and the similar $\dot{s}_2,\dot{s}_3$ into the state space \eqref{dotva}, the derivative becomes:
     \begin{align}
         \label{redotva}
          \dot{V}_a=&S_3 \bigg{ \{ }\lambda_1\lambda_2 [ c_1(\dot{\hat{x}}-\dot{x}_r)+(f_{\hat{x}}+b_{\hat{x}}u-\ddot{x}_r)] \notag \\ 
          & \ \ \ \ \ \ + \lambda_2 [ c_2(\dot{\hat{y}}-\dot{y}_r)+(f_{\hat{y}}+b_{\hat{y}}u-\ddot{y}_r)] \notag \\
          & \ \ \  \ \ \ + c_3(\dot{\hat{z}}-\dot{z}_r)+(f_{\hat{z}}+b_{\hat{z}}u-\ddot{z}_r)\bigg{ \} } \notag \\
          =& S_3 \bigg{ [ }\lambda_1\lambda_2 b_{\hat{x}}(u_{eqy}+u_{eqz}+u_{sw}) + \lambda_2b_{\hat{y}} (u_{eqx} \notag \\ & \ \ \ \ \  +u_{eqz} +u_{sw})  + b_{\hat{z}}(u_{eqx}+u_{eqy}+u_{sw}) \bigg{ ] } \notag \\
          =& -K_aS_3^2-\eta S_3 \text{sat}(S_3).
     \end{align}
\end{proof}
Hence, the sliding surface is stable. Moreover, applying Barbalat’s lemma \cite{slotine1991applied}, the sliding surface will converge to zero. Since the sliding surface $S_3$ is asymptotically stable from the time domain $[0,t_f]$, the stability of sliding surface $S_1, S_2$  can be archived in the time domain $[t_f,\infty]$. The proof of this stability is in \cite{qian2016hierarchical}.
\subsubsection{Incremental HSMC}
\begin{figure}[!ht]
\centering
\begin{tikzpicture}
\draw[->] (0.25,-0.95) node[below]{$e_1$} -- (0.25,-0.05);
\draw[->] (1.25,-0.95) node[below]{$e_2$} -- (1.25,-0.05);
\draw[->] (2.25,-0.95) node[below]{$e_3$} -- (2.25,1);
\draw[->] (3.25,-0.95) node[below]{$e_4$} -- (3.25,2.05);
\draw[->] (4.25,-0.95) node[below]{$e_5$} -- (4.25,3.1);
\draw[->] (5.25,-0.95) node[below]{$e_6$} -- (5.25,4.15);
\filldraw[black, very thick] (0,0) rectangle (1.5,0.1);
\draw[->] (0.75,0.1) -- (0.75,1.0);
\draw (0.75,0.55) node[anchor=east]{$s_1$};
\filldraw[black, very thick]  (0.5,1.05)  rectangle (2.5,1.15);
\draw[->] (1.5,1.15) -- (1.5,2.05);
\draw (1.55,1.6) node[anchor=east]{$s_2$};
\filldraw[black, very thick]  (1.25,2.1)  rectangle (3.5,2.2);
\draw[->] (2.25,2.2) -- (2.25,3.1);
\draw (2.25,2.65) node[anchor=east]{$s_3$};
\filldraw[black, very thick]  (2,3.15)  rectangle (4.5,3.25);
\draw[->] (3.25,3.25) -- (3.25,4.15);
\draw (3.25,3.75) node[anchor=east]{$s_4$};
\filldraw[black, very thick]  (3,4.2)  rectangle (5.5,4.3);
\draw[->] (4.25,4.35) -- (4.25,5.25);
\draw (4.25,5.65) node[anchor=north]{$s_5$};
\end{tikzpicture}
\caption{Structure of IHSMC.} 
\label{ihsmcstructure}
\end{figure}
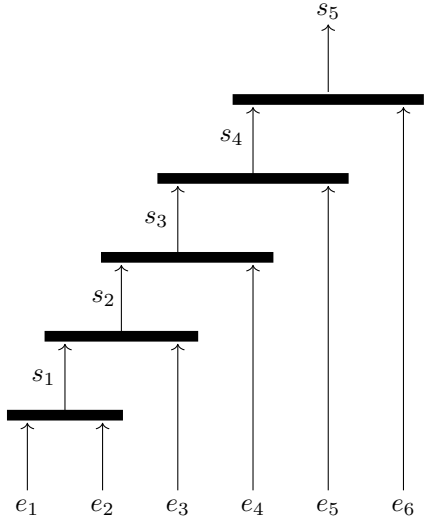
The idea of IHSMC is to add an additional state into each layer. Without the loss of generality, the first layer is the construction of control error along $x_1,x_2$ variables, and the higher layer takes control error of the next variable state. As a result, the structure of IHSMC is depicted in Figure~\ref{ihsmcstructure}.
The sliding surface of the IHSMC is designed as:
\begin{align}
    s_1=c_1e_2+e_1  \\
    s_2=c_2e_3+s_1 \\
    s_3=c_3e_4+s_2  \\
    s_4=c_4e_5+s_3  \\
    s_5=c_5e_6+s_4 
\end{align}
where $c_1, c_2, c_3, c_4, c_5$ are positive numbers, and the error $e_{i,i=1-6}$ is defined as the same of the AHSMC.
\begin{theorem}
    In the view of Assumption~\ref{assumerror}, considering the quadrotor UAVs with the state space \eqref{statedyna}, if the IHSMC controller is proposed as:
    \begin{align} \label{controlihsmc}
        u= u_{eq}+u_{sw}
    \end{align}
where $u_{eq},u_{sw}$ are the equivalent control law and switch control law, respectively, then the sliding surface $s_5$  is asymptotically stable.
\end{theorem}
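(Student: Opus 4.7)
The plan is to mirror the AHSMC argument but with the nested, telescoping structure of the IHSMC surfaces. Take the Lyapunov candidate $V_i=\tfrac{1}{2}s_5^2$ and aim to show that, under the proposed $u=u_{eq}+u_{sw}$, one obtains $\dot V_i=-K_i s_5^2-\eta\, s_5\,\text{sat}(s_5)\le 0$, exactly as in the AHSMC case, so that Barbalat's lemma then delivers $s_5\to 0$. After that, the intermediate surfaces $s_1,\ldots,s_4$ can be handled as in the AHSMC argument by invoking the result of \cite{qian2016hierarchical}: once the top layer is asymptotically stable on $[0,t_f]$, stability propagates down through the hierarchy on $[t_f,\infty]$.

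The first step is to unroll $\dot s_5$ using the recursion $s_k=c_k e_{k+1}+s_{k-1}$ (with $s_1=c_1e_2+e_1$). Because $\dot e_1=e_2$, $\dot e_3=e_4$, $\dot e_5=e_6$ and
\begin{align*}
\dot e_2 &= f_{\hat x}+b_{\hat x}u-\ddot x_r,\\
\dot e_4 &= f_{\hat y}+b_{\hat y}u-\ddot y_r,\\
\dot e_6 &= f_{\hat z}+b_{\hat z}u-\ddot z_r,
\end{align*}
only the velocity-error derivatives inject $u$. Collecting terms gives the key identity
\begin{align*}
\dot s_5 = \bigl(c_1 b_{\hat x}+c_3 b_{\hat y}+c_5 b_{\hat z}\bigr)u + \Phi(\hat{\mathbf X},\mathbf r),
\end{align*}
where $\Phi$ lumps together $e_2$, $c_2 e_4$, $c_4 e_6$, the drifts $c_1 f_{\hat x}+c_3 f_{\hat y}+c_5 f_{\hat z}$, and the reference accelerations $-(c_1\ddot x_r+c_3\ddot y_r+c_5\ddot z_r)$. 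This is the IHSMC analogue of equation (\ref{deris1}) chained through all five surfaces.

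With this identity in hand, the equivalent part is chosen to cancel $\Phi$ and the switching part to enforce reaching, namely
\begin{align*}
u_{eq} = -\frac{\Phi(\hat{\mathbf X},\mathbf r)}{c_1 b_{\hat x}+c_3 b_{\hat y}+c_5 b_{\hat z}},\qquad
u_{sw} = -\frac{K_i s_5+\eta\,\text{sat}(s_5)}{c_1 b_{\hat x}+c_3 b_{\hat y}+c_5 b_{\hat z}},
\end{align*}
with $K_i,\eta>0$. Substituting into $\dot V_i=s_5\dot s_5$ yields the desired $\dot V_i=-K_i s_5^2-\eta s_5\,\text{sat}(s_5)\le 0$, so $s_5$ is asymptotically stable and, by Barbalat's lemma \cite{slotine1991applied}, converges to zero. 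Assumption~\ref{assumerror} is what allows us, as in equation (\ref{lyasphi}), to drop the EKF estimation-error terms from the Lyapunov derivative.

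The main obstacle I anticipate is purely bookkeeping: the denominator $c_1 b_{\hat x}+c_3 b_{\hat y}+c_5 b_{\hat z}$ must be shown to be nonzero over the operating regime so that $u_{eq}$ is well defined. Since the $c_k$ are positive design constants and, in the hover-like regime covered by the paper's small-angle assumption, $b_{\hat z}=c_\theta c_\phi/m$ stays bounded away from zero, this is not a stability issue but should be acknowledged as a standing nonsingularity condition. Beyond that, the algebraic chain for $\dot s_5$ is routine, and the concluding application of Barbalat's lemma and the inheritance of stability by the lower-layer surfaces follow exactly the template already used for the AHSMC theorem.
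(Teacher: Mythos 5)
Your proposal is correct and follows essentially the same route as the paper: the same Lyapunov candidate $V_i=\tfrac{1}{2}s_5^2$, the same unrolling of $\dot{s}_5$ into the drift terms $e_2$, $c_2e_4$, $c_4e_6$, $c_1f_{\hat{x}}+c_3f_{\hat{y}}+c_5f_{\hat{z}}$ and the input gain $c_1b_{\hat{x}}+c_3b_{\hat{y}}+c_5b_{\hat{z}}$, the same choice of $u_{eq}$ and $u_{sw}$, the resulting $\dot{V}_i=-K_is_5^2-\eta s_5\,\text{sat}(s_5)$, and the appeal to \cite{qian2016hierarchical} for the lower-layer surfaces. Your explicit remark that the denominator must be bounded away from zero is a reasonable addition that the paper leaves implicit.
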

\begin{proof}
According to Lyapunov's stability, considering the Lyapunov candidate as:
    \begin{align}
        V_{i}=\frac{1}{2}s_5^2
    \end{align}
Differentiating $V_{i}$ with respect to time $t$:
\begin{align} \label{dotvi}
        \dot{V}_i&=s_5\dot{s}_5=s_5(c_5\dot{e}_6+c_4\dot{e}_5+c_3\dot{e}_4+c_2\dot{e}_3+c_1\dot{e}_2+\dot{e}_1) \notag \\
         &=s_5\big{ [ } c_5(\ddot{\hat{z}}-\ddot{z}_r) + c_4(\dot{\hat{z}}-\dot{z}_r) +c_3(\ddot{\hat{y}}-\ddot{y}_r) \notag \\
         &\ \ \ \ +c_2(\dot{\hat{y}}-\dot{y}_r) +c_1(\ddot{\hat{x}}-\ddot{x}_r) +(\dot{\hat{x}}-\dot{x}_r) \big{ ] }  \notag \\
         &= s_5\big{ [ } c_5(f_{\hat{z}}+b_{\hat{z}}u-\ddot{z}_r) + c_3(f_{\hat{y}}+b_{\hat{y}}u-\ddot{y}_r) \notag \\
         & \ \ \ \ + c_1(f_{\hat{x}}+b_{\hat{x}}u-\ddot{x}_r) + c_4(\dot{\hat{z}}-\dot{z}_r) \notag \\
         & \ \ \ \ +c_2(\dot{\hat{y}}-\dot{y}_r) + (\dot{\hat{x}}-\dot{x}_r) \big{ ] }.
     \end{align}
     Therefore, the equivalent control law and the switch control law are proposed as:
     \begin{align}
         u_{eq}&= -\frac{c_5f_{\hat{z}}+ c_3f_{\hat{y}}+c_1f_{\hat{x}}+ c_4(\dot{\hat{z}}-\dot{z}_r)+ c_2(\dot{\hat{y}}-\dot{y}_r)}{c_5b_{\hat{z}}+c_3b_{\hat{y}}+c_1b_{\hat{x}}} \notag \\
         & \ \ \ - \frac{ (\dot{\hat{x}}-\dot{x}_r) - c_5\ddot{z}_r-c_3\ddot{y}_r-c_1\ddot{x}_r }{c_5b_{\hat{z}}+c_3b_{\hat{y}}+c_1b_{\hat{x}}} \\
          u_{sw}&= -\frac{K_is_5+\eta \text{sat}(s_5)}{c_5b_{\hat{z}}+c_3b_{\hat{y}}+c_1b_{\hat{x}}}
     \end{align}
     with $K_i$ is a positive number, and $\text{sat}(s_5)$ is the saturation sign function.
     Substituting the control law \eqref{controlihsmc} into the \eqref{dotvi}, the derivative becomes:
     \begin{align}
         \label{redotvi}
          \dot{V}_i= -K_is_5^2-\eta s_5 \text{sat}(s_5).
     \end{align}
     Hence, the sliding surface is stable. To check the stability of all the element sliding surfaces $s_i$, the proof is provided in \cite{qian2016hierarchical}.
\end{proof}

\subsubsection{Combining HSMC}
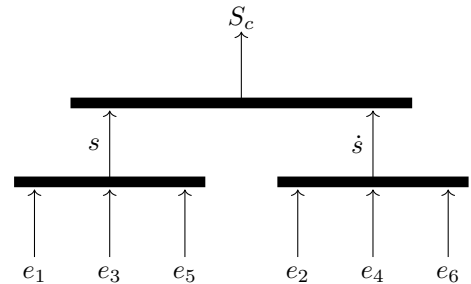
\begin{figure}[!ht]
\centering
\begin{tikzpicture}
\draw[->] (0.25,-0.95) node[below]{$e_1$} -- (0.25,-0.05);
\draw[->] (1.25,-0.95) node[below]{$e_3$} -- (1.25,-0.05);
\draw[->] (2.25,-0.95) node[below]{$e_5$} -- (2.25,-0.05);
\draw[->] (3.75,-0.95) node[below]{$e_2$} -- (3.75,-0.05);
\draw[->] (4.75,-0.95) node[below]{$e_4$} -- (4.75,-0.05);
\draw[->] (5.75,-0.95) node[below]{$e_6$} -- (5.75,-0.05);
\filldraw[black, very thick] (0,0) rectangle (2.5,0.1);
\draw[->] (1.25,0.1) -- (1.25,1);
\draw (1.25,0.55) node[anchor=east]{$s$};
\filldraw[black, very thick] (3.5,0) rectangle (6,0.1);
\draw[->] (4.75,0.1) -- (4.75,1);
\draw (4.75,0.55) node[anchor=east]{$\dot{s}$};
\filldraw[black, very thick] (0.75,1.05) rectangle (5.25,1.15);
\draw[->] (3,1.15) -- (3,2.05);
\draw (3,2.5) node[anchor=north]{$S_c$};
\end{tikzpicture}
\caption{Structure of CHSMC.} 
\label{chsmcstructure}
\end{figure}

\begin{figure*}
\centering
{\resizebox*{16cm}{!}{\includegraphics{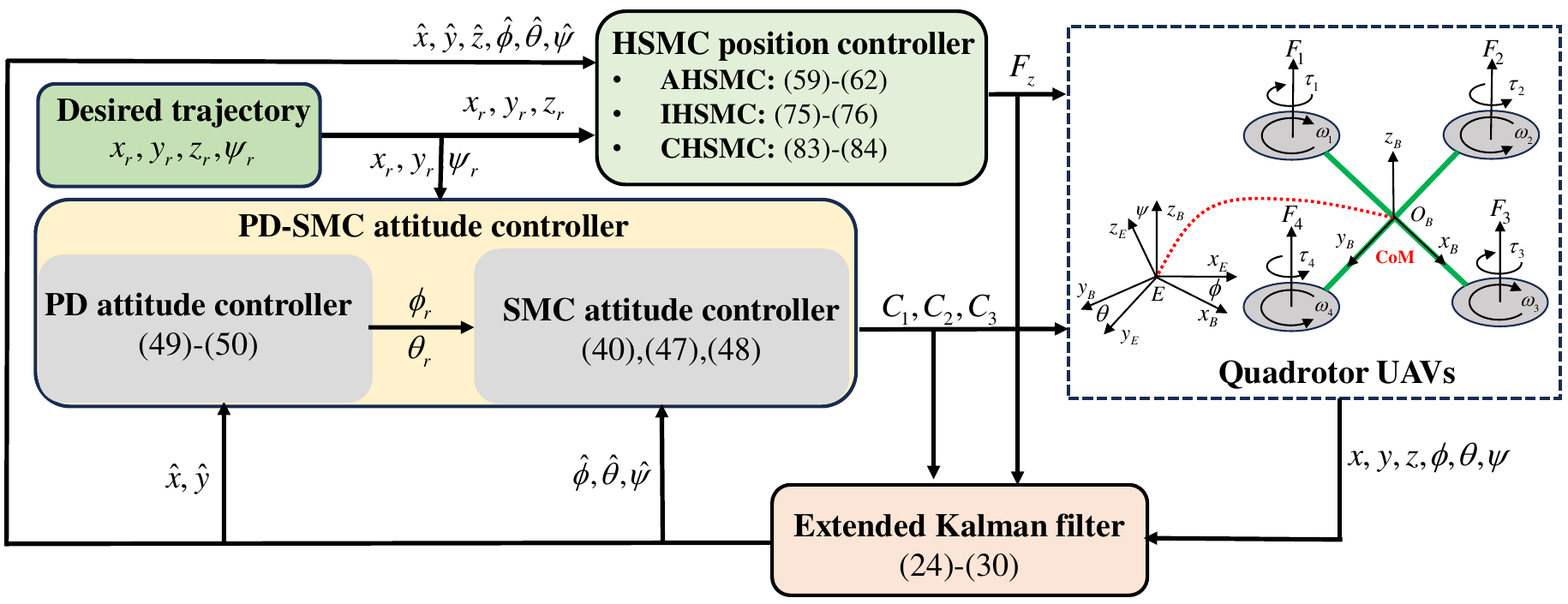}}}\hspace{5pt}
\caption{Close-loop control structure of the proposed method.} 
\label{schemeuavs}
\end{figure*}

The idea of the CHSMC is to divide the system state into two parts: the derivative combination and the linear combination. The structure of CHSMC is depicted in Figure~\ref{chsmcstructure}, where $s$ represents the linear combination, and $\dot{s}$ represents the derivative combination.
The element sliding surface of the CHSMC is proposed as:
\begin{align} \label{chsmcslideele}
    s=c_1e_1+c_2e_3+c_3e_5 \notag \\
    \dot{s}=c_1e_2+c_2e_4+c_3e_6
\end{align}
where $c_1,c_2,c_3$ are control parameters, and the control error is defined as the same as the AHSMC and the IHSMC.
The CHSMC sliding surface is proposed as:
\begin{align} \label{shsmcslide}
    S_c= \alpha s + \dot{s}.
\end{align}
\begin{theorem}
    In the view of Assumption~\ref{assumerror}, considering the quadrotor UAVs with the state space \eqref{statedyna}, if the CHSMC controller is proposed as:
    \begin{align} \label{controlchsmc}
        u= u_{eq}+u_{sw}
    \end{align}
where $u_{eq},u_{sw}$ are the equivalent control law and switch control law respectively, then the sliding surface $S_c$  is asymptotically stable.
\end{theorem}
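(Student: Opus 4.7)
The plan is to follow the same Lyapunov template used for AHSMC and IHSMC, with the wrinkle that the combined surface $S_c=\alpha s+\dot{s}$ couples a linear combination with a derivative combination, so its derivative will involve both first and second derivatives of the position errors. I would first take the candidate $V_c=\tfrac{1}{2}S_c^{2}$ and write
\begin{align}
\dot{V}_c=S_c\dot{S}_c=S_c\bigl(\alpha\dot{s}+\ddot{s}\bigr),
\end{align}
then expand $\dot{s}$ and $\ddot{s}$ termwise using the definitions in \eqref{chsmcslideele}. Since $\dot{e}_{2i-1}=e_{2i}$ for $i=1,2,3$, the term $\alpha\dot{s}$ collapses to $\alpha(c_1e_2+c_2e_4+c_3e_6)$, which is exactly $\alpha\cdot\dot{s}$ in the notation of the element surface. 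For $\ddot{s}$, each $\dot{e}_{2i}$ is replaced by $f_{i}+b_{i}u-\ddot{x}_{i,r}$ using \eqref{statedyna} under Assumption~\ref{assumerror}.

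Once the expansion is done, $\dot{V}_c$ becomes an affine function of the single control $u$ with coefficient $c_1b_{\hat{x}}+c_2b_{\hat{y}}+c_3b_{\hat{z}}$. The natural split is then to define
\begin{align}
u_{eq}=-\frac{\alpha(c_1e_2+c_2e_4+c_3e_6)+c_1(f_{\hat{x}}-\ddot{x}_r)+c_2(f_{\hat{y}}-\ddot{y}_r)+c_3(f_{\hat{z}}-\ddot{z}_r)}{c_1b_{\hat{x}}+c_2b_{\hat{y}}+c_3b_{\hat{z}}},
\end{align}
which cancels the drift contribution to $\dot{S}_c$, and
\begin{align}
u_{sw}=-\frac{K_{c}S_c+\eta\,\mathrm{sat}(S_c)}{c_1b_{\hat{x}}+c_2b_{\hat{y}}+c_3b_{\hat{z}}}
\end{align}
with $K_c,\eta>0$, which injects the reaching term. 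Substituting $u=u_{eq}+u_{sw}$ should yield $\dot{V}_c=-K_cS_c^{2}-\eta S_c\,\mathrm{sat}(S_c)\le 0$, after which Barbalat's lemma gives $S_c\to 0$ asymptotically, exactly as in the AHSMC and IHSMC proofs.

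The main obstacle I expect is not the algebra, which is mechanical, but the well-posedness of the denominator $c_1b_{\hat{x}}+c_2b_{\hat{y}}+c_3b_{\hat{z}}$. From \eqref{redyna} we have $b_{\hat{z}}=c_{\theta}c_{\phi}/m$ and $b_{\hat{x}},b_{\hat{y}}$ proportional to combinations of sines of the attitude angles, so under the low-amplitude attitude assumption and with $c_i>0$, the denominator is bounded away from zero and keeps its sign along trajectories; I would state this explicitly as a standing condition on the operating envelope. A secondary issue is the interpretation of $\dot{s}$ as a surface rather than the time derivative of $s$: because $\dot{e}_{2i-1}=e_{2i}$, the two coincide, so the slight abuse in \eqref{chsmcslideele} is harmless and no extra terms enter $\ddot{s}$. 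Finally, once $S_c\to 0$, reduced-order stability of the element surfaces $s$ and $\dot{s}$ follows by the same argument invoked for AHSMC, with a reference to \cite{qian2016hierarchical}.
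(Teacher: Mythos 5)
Your proposal matches the paper's own proof essentially line for line: the same Lyapunov candidate $V_c=\tfrac{1}{2}S_c^{2}$, the same expansion of $\dot{S}_c=\alpha\dot{s}+\ddot{s}$ via \eqref{statedyna}, the same $u_{eq}$ and $u_{sw}$ (yours is just written as one fraction instead of two), and the same conclusion $\dot{V}_c=-K_cS_c^{2}-\eta S_c\,\mathrm{sat}(S_c)$ followed by a reference to \cite{qian2016hierarchical} for the element surfaces. Your added remark that the denominator $c_1b_{\hat{x}}+c_2b_{\hat{y}}+c_3b_{\hat{z}}$ must be bounded away from zero is a point the paper leaves implicit, and is worth keeping.
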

\begin{proof}
According to Lyapunov's stability, considering the Lyapunov candidate as:
    \begin{align}
        V_{c}=\frac{1}{2}S_c^2.
    \end{align}
Differentiating $V_{c}$ with respect to time $t$:
\begin{align} \label{dotvc}
        \dot{V}_c&=S_c\dot{S}_c=S_c(\alpha \dot{s} +\ddot{s}) \notag \\
         &=S_c\big{ [ } \alpha c_1(\dot{\hat{x}}-\dot{x}_r) + \alpha c_2(\dot{\hat{y}}-\dot{y}_r) + \alpha c_3(\dot{\hat{z}}-\dot{z}_r)  \notag \\
         &\ \ \ \ \ \  +c_1(\ddot{\hat{x}}-\ddot{x}_r) + c_2(\ddot{\hat{y}}-\ddot{y}_r) + c_3(\ddot{\hat{z}}-\ddot{z}_r)  \big{ ] }  \notag \\
         &= S_c \big{ [ } \alpha c_1(\dot{\hat{x}}-\dot{x}_r) + \alpha c_2(\dot{\hat{y}}-\dot{y}_r) + \alpha c_3(\dot{\hat{z}}-\dot{z}_r)  \notag \\
         & \ \ \ \ \ \ + c_1(f_{\hat{x}}+b_{\hat{x}}u-\ddot{x}_r) +  c_2(f_{\hat{y}}+b_{\hat{y}}u-\ddot{y}_r) \notag \\
         & \ \ \ \ \ \ + c_3(f_{\hat{z}}+b_{\hat{z}}u-\ddot{z}_r) \big{ ] }.
     \end{align}
     Therefore, the equivalent control law and the switch control law are proposed as:
     \begin{align}
         u_{eq}&= -\frac{c_1f_{\hat{x}}+ c_2f_{\hat{y}}+c_3f_{\hat{z}}-c_1\ddot{x}_r -c_2\ddot{y}_r-c_3\ddot{z}_r}{c_1b_{\hat{x}}+c_2b_{\hat{y}}+c_3b_{\hat{z}}} \notag \\
         & \ \ \ - \frac{  \alpha c_1(\dot{\hat{x}}-\dot{x}_r) + \alpha c_2(\dot{\hat{y}}-\dot{y}_r) + \alpha c_3(\dot{\hat{z}}-\dot{z}_r) }{c_1b_{\hat{x}}+c_2b_{\hat{y}}+c_3b_{\hat{z}}} \\
          u_{sw}&= -\frac{K_c S_c+\eta \text{sat}(S_c)}{c_1b_{\hat{x}}+c_2b_{\hat{y}}+c_3b_{\hat{z}}}
     \end{align}
     with $K_c$ is a positive number, and $\text{sat}(S_c)$ is the saturation sign function.
     Substituting the control law \eqref{controlchsmc} into the \eqref{dotvc}, the derivative becomes:
     \begin{align}
         \label{redotvc}
          \dot{V}_c= -K_c S_c^2-\eta S_c \text{sat}(S_c).
     \end{align}
     Hence, the sliding surface is stable. To check the stability of all the element sliding surfaces $s$ and $ \dot{s}$, the proof is provided in \cite{qian2016hierarchical}.
\end{proof}
The closed-loop scheme for the proposed strategy is provided in Figure~\ref{schemeuavs}. 
\begin{algorithm}
\caption{HSMC based EKF strategy}\label{algorithm1}
1. At time $t=t_k$, measure the current state $x,y,z,\phi,\theta,\psi$. \\
2. Input the measured states and the previous control input into the EKF to estimate all the states for the controller: \\
2.1. Linearization of the system around the previous working states: \eqref{Aline},\eqref{Bline}.    \\
2.2. Estimating the prior estimate state and covariance: \eqref{priorx},\eqref{priorp}.  \\
2.3. Updating the posterior estimate, observer gain, and covariance: \eqref{kgain},\eqref{pafter},\eqref{updatexk1}  \\
3. Input the estimated states and the desired trajectory to the controller to calculate the control signals $F_z, C_1, C_2, C_3$ for the quadrotor UAVs.\\
4. Perform the quadrotor UAVs with these control laws. \\
5. Set $t=t_k+ T_s$ (with $T_s$ is the sample time) and go back to step 1.
\end{algorithm}

\section{Experimental Results}
In this section, to prove the effectiveness and performance of the proposed HSMC-EKF strategy, a comparison to the other methods PID-EKF and SO-SMC-EKF is provided. The control law of the PID-EKF is proposed based on \cite{salih2010modelling} as:
\begin{align}
    F_z &= k_{pz} (z_r-\hat{z}) + k_{dz}(\dot{z}_r-\dot{\hat{z}}) +mg  \\
    C_1 &=  k_{p\phi} (\phi_r-\hat{\phi}) + k_{d\phi}(\dot{\phi}_r-\dot{\hat{\phi}})\\
    C_2 &= k_{p\theta} (\theta_r-\hat{\theta}) + k_{d\theta}(\dot{\theta}_r-\dot{\hat{\theta}})\\
    C_3 &= k_{p\psi} (\psi_r-\hat{\psi}) + k_{d\psi}(\dot{\psi}_r-\dot{\hat{\psi}}).
\end{align}
In the PID-EKF, we choose the integral gain as zero, and $k_{pz},k_{dz},k_{p\phi},k_{d\phi},k_{p\theta},k_{d\theta}, k_{p\psi}$ and $k_{d\psi}$ are control parameters. The control law of the SO-SMC-EKF is proposed based on \cite{zheng2014second} as:
\begin{align} 
F_z &=m \frac{c_z\left(\dot{z}_r-\dot{\hat{z}}\right)+\ddot{z}_r+g+d_1+\varepsilon_1 \operatorname{sgn}\left(s_1\right)+\eta_1 s_1}{\cos \phi \cos \theta} \\ 
C_1&=I_{xx}\bigg{ \{ }\frac{c_1}{c_3}\left(\ddot{y}_r-\ddot{\hat{y}}\right)+\frac{c_2}{c_3}\left(\dot{y}_r-\dot{\hat{y}}\right)+\ddot{\phi}_r +d_2 \notag \\
& \ \ \ \ \ \ \ \ \  +\frac{c_4}{c_3}\left(\dot{\phi}_r-\dot{\hat{\phi}}\right)+\frac{1}{c_3}\left[\varepsilon_2 \operatorname{sign}\left(s_2 \right)+\eta_2 s_2\right] \bigg{ \} } \\
C_2&={I_{yy}} \bigg{ \{ }\frac{c_5}{c_7}\left(\ddot{x}_r-\ddot{\hat{x}}\right)+\frac{c_6}{c_7}\left(\dot{x}_r-\dot{\hat{x}}\right)+\ddot{\theta}_r +d_3 \notag \\
& \ \ \ \ \ \ \ \ \ +\frac{c_8}{c_7}\left(\dot{\theta}_r-\dot{\hat{\theta}}\right)+\frac{1}{c_7}\left[\varepsilon_3 \operatorname{sign}\left(s_3\right)+\eta_3 s_3\right] \bigg{ \} }\\
C_3 &=I_{zz}\left[c_\psi\left(\dot{\psi}_r-\dot{\hat{\psi}} \right)+\ddot{\psi}_r+d_4+\varepsilon_4 \operatorname{sign}\left(s_4\right)+\eta_4 s_4\right]
\end{align}
where the sliding surfaces are chosen as:
\begin{align}
    s_1&=c_z(z_r-\hat{z}) + (\dot{z}_r-\dot{\hat{z}}) \\
    s_2&=c_1(\dot{y}_r-\dot{\hat{y}}) + c_2 (y_r-\hat{y})\notag \\
       & \ \ \ \ \ \  + c_3(\dot{\phi}_r-\dot{\hat{\phi}}) + c_4(\phi_r-\hat{\phi})  \\
    s_3&=c_5 (\dot{x}_r-\dot{\hat{x}})+ c_6(x_r-\hat{x}) \notag \\
       & \ \ \ \ \ \ + c_7(\dot{\theta}_r-\dot{\hat{\theta}}) + c_8(\theta_r-\hat{\theta}) \\
    s_4&=c_\psi (\psi_r-\hat{\psi}) + (\dot{\psi}_r-\dot{\hat{\psi}})
\end{align}
in which $\varepsilon_i, \eta_i \ (i=1-4)$, $c_z$, and$ c_k (k=1-8)$ are control parameters, and $d_1,d_2,d_3,d_4$ represent the disturbance terms and are defined as:
\begin{align}
    d_1 &= K_{dz} \dot{\hat{z}}/m \\
    d_2 &= - \dot{\hat{\theta}}\dot{\hat{\psi}} (I_{yy}-I_{zz})/I_{xx}  \\
    d_3 &= - \dot{\hat{\psi}}\dot{\hat{\phi}} (I_{zz}-I_{xx})/I_{yy}  \\
    d_4 &= - \dot{\hat{\phi}}\dot{\hat{\theta}} (I_{xx}-I_{yy})/I_{zz}.
\end{align}
The initial position and the angle values of the UAVs are [0, 0, 0] m and [0, 0, 0] rad, respectively. The system parameters, the observer parameters, and the control parameters are listed in Table~\ref{parameter}.
The simulation is performed in Matlab/Simulink 2023b with the sample time as chosen as $T_s=0.01 (s)$. The simulations are developed on a computer with 2.3 GHz Intel Core i7- 11800H and 16 GB RAM. In addition, the disturbances are white noises, and all of the simulations can be found on this link: \url{https://github.com/aralab-unr/HSMC-EKF-for-Quadrotor-UAVs}. For the simulation scenario, we tested the performance of the quadrotor UAVs in three scenarios, which can be seen in Table~\ref{scenariosimu}. Finally, the results are demonstrated in Figure~\ref{xsetpoint}-Figure~\ref{c3square}. The results are shown in terms of trajectory tracking, the estimated states, and the control laws. The tracking ability and the control law following the Z-axis have shown the effectiveness of the proposed methods with the compared strategies.
\begin{table*}
\centering
\caption{The control performance benchmark under the extended state observer}
{\begin{tabular}{|l|l|l|l|l|l|l|l|} \hline
\bf{System parameter}&\bf{EKF}&\bf{PD-SMC-AC}&\bf{AHSMC-EKF}& \bf{CHSMC-EKF}& \bf{IHSMC-EKF}  & \bf{PID-EKF} & \bf{SO-SMC-EKF}  \\ \hline
$m= 1.96 (kg)$    & $Q= 1e-5$ & $k_{px}=0.1$ & $c_1=0.05$ & $c_1=0.05$  & $c_1=0.05$ & $k_{pz}=10$ & $c_1=0.02, c_2=0.01$ \\ 
$g=9.81 (m/s^2)$    & $R=1e-6$ & $k_{dx}=0.15$ & $c_2=0.05$ & $c_2=0.05$  & $c_2=0.05$ & $k_{dz}=12$ & $c_3=0.2, c_4=0.3$ \\ 
$I_{xx}= 0.00149$ & & $k_{py}=0.1$ & $c_3=1$ & $c_3=1$ & $c_3=0.05$ & $k_{p\phi}=0.6$ & $c_5=0.05, c_6=0.01$\\
$I_{yy}=0.00153$ & & $k_{dy}=0.15$ & $\lambda_1=0.05$ & $\alpha=1.5$ & $c_4=10.05$ & $k_{d\phi}=0.4$ & $c_7=0.2, c_8=0.3$\\
$I_{zz}=0.00532$    & & $c_\phi =3.5$ & $\lambda_2=0.05$ & $K_c=0.5$ & $c_5=3.25$ & $k_{p\theta}=0.6$ & $\eta_1=\eta_4=2$ \\
$K_{dx}= 0.00055670$ & &  $c_\theta=3.5$ & $K_a=0.34$ & $\eta=0.25$& $K_i=0.75$ & $k_{d\theta}=0.4$& $\eta_2=\eta_3=5$  \\
$K_{dy}= 0.00055670$ &  &  $c_\psi=0.5$ & $\eta=0.25$  &  & $\eta=0.25$ & $k_{p\psi}=0.4$ & $\varepsilon_1=1.7, \varepsilon_4=1.2$\\
$K_{dz}= 0.0006354$ & &  $K_\phi=0.4$ &  &  &  & $k_{d\psi}=0.3$ & $\varepsilon_2=\varepsilon_3=1.5$ \\ 
 & &  $K_\theta=0.4$ &  &  & &  & $c_z = 2.5$  \\
 & &  $K_\psi=0.2$ &  &  &  &  & $c_\psi=0.25$ \\\hline
\end{tabular}}
\label{parameter}
\end{table*}	
\begin{table}
\centering
\caption{Simulation scenarios}
{\begin{tabular}{|l|l|l|} \hline
\bf{Scenario}&\bf{Trajectory}&\bf{Time}  \\ \hline
& & \\
Scenario 1 & $[x_r;y_r;z_r,\psi_r]=[12;12;12;0.5]$ & $t=0-15$(s) \\ 
 & & \\ \hline
 & & \\
Scenario 2 & $ x_r= sin(\frac{\pi t}{5})$, $y_r= -1+cos(\frac{\pi t}{5})$  & $t=0-60$(s) \\ 
    & $z_r = \frac{t}{2}$, $\psi_r=0.5$  & \\  
    & & \\ \hline
    & & \\
            & $[x_r;y_r;z_r;\psi_r]=[3;3;3;0.2]$ & $t=0-10$(s) \\  
            & $[x_r;y_r;z_r;\psi_r]=[1.5;3;3;0.2]$ & $t=10-20$(s) \\ 
Scenario 3  & $[x_r;y_r;z_r;\psi_r]=[1.5;1.5;3;0.2]$ & $t=20-30$(s) \\ 
            & $[x_r;y_r;z_r;\psi_r]=[3;1.5;3;0.4]$ & $t=30-40$(s) \\ 
            & $[x_r;y_r;z_r;\psi_r]=[3;3;3;0.4]$ & $t=40-50$(s) \\ 
            & $[x_r;y_r;z_r;\psi_r]=[3;3;0;0.4]$ & $t=50-60$(s) \\  
            & & \\ \hline
\end{tabular}}
\label{scenariosimu}
\end{table}	
\subsection{Scenario 1}
The results under scenario 1 are illustrated from  Figure~\ref{xsetpoint}-Figure~\ref{c3setpoint}, where the position and its estimation are shown in Figure~\ref{xsetpoint}-Figure~\ref{psihsetpoint}, and the control laws are shown Figure~\ref{fzsetpoint}-Figure~\ref{c3setpoint}. 
\begin{figure}[H]
\centering
{\resizebox*{8.5 cm}{!}{\includegraphics{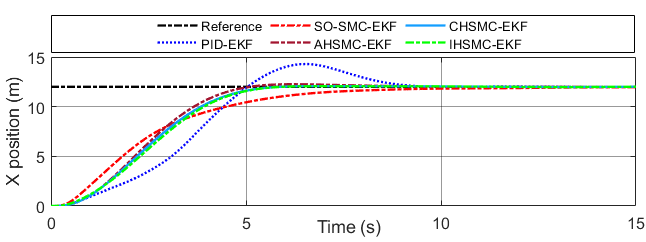}}}\hspace{5pt}
\caption{X-axis tracking trajectory scenario 1.} 
\label{xsetpoint}
\end{figure}
\begin{figure}[H]
\centering
{\resizebox*{8.5 cm}{!}{\includegraphics{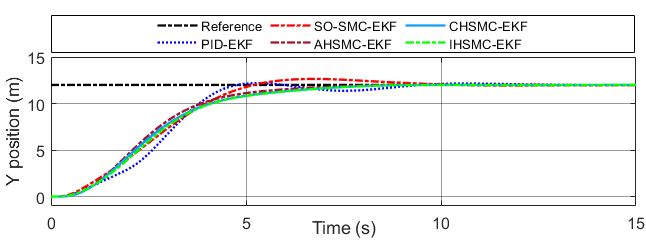}}}\hspace{5pt}
\caption{Y-axis tracking trajectory scenario 1.} 
\label{ysetpoint}
\end{figure}
\begin{figure}[H]
\centering
{\resizebox*{8.5 cm}{!}{\includegraphics{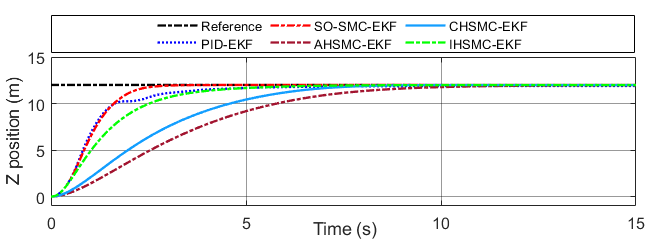}}}\hspace{5pt}
\caption{Z-axis tracking trajectory scenario 1.} 
\label{zsetpoint}
\end{figure}
\begin{figure}[H]
\centering
{\resizebox*{8.5 cm}{!}{\includegraphics{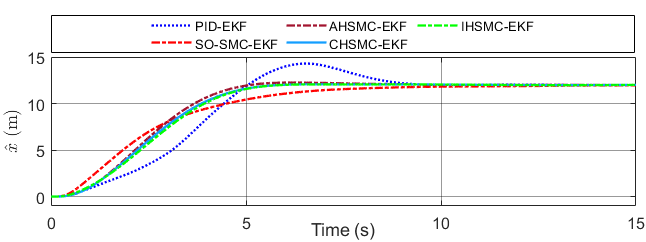}}}\hspace{5pt}
\caption{Estimated X (m) scenario 1.} 
\label{xhsetpoint}
\end{figure}
\begin{figure}[H]
\centering
{\resizebox*{8.5 cm}{!}{\includegraphics{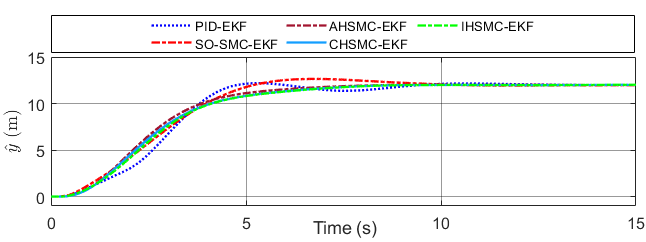}}}\hspace{5pt}
\caption{Estimated Y (m) scenario 1.} 
\label{yhsetpoint}
\end{figure}
\begin{figure}[H]
\centering
{\resizebox*{8.5 cm}{!}{\includegraphics{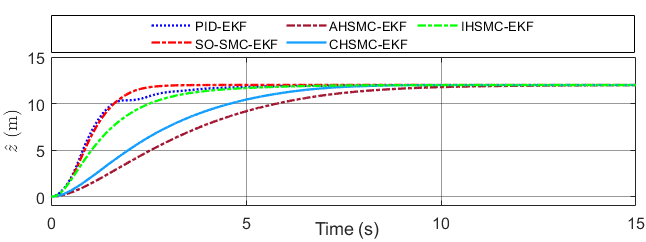}}}\hspace{5pt}
\caption{Estimated Z (m) scenario 1.} 
\label{zhsetpoint}
\end{figure}
\begin{figure}[H]
\centering
{\resizebox*{8.5 cm}{!}{\includegraphics{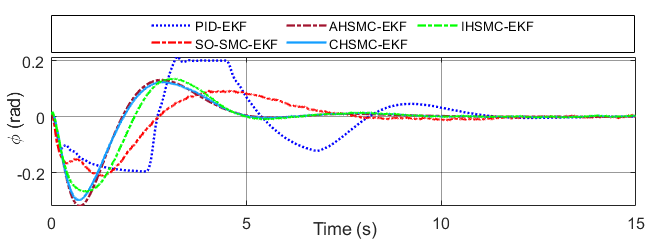}}}\hspace{5pt}
\caption{Roll angle scenario 1.} 
\label{phisetpoint}
\end{figure}
\begin{figure}[H]
\centering
{\resizebox*{8.5 cm}{!}{\includegraphics{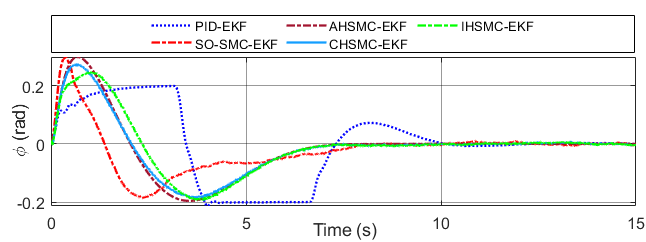}}}\hspace{5pt}
\caption{Pitch angle scenario 1.} 
\label{thetasetpoint}
\end{figure}
\begin{figure}[H]
\centering
{\resizebox*{8.5 cm}{!}{\includegraphics{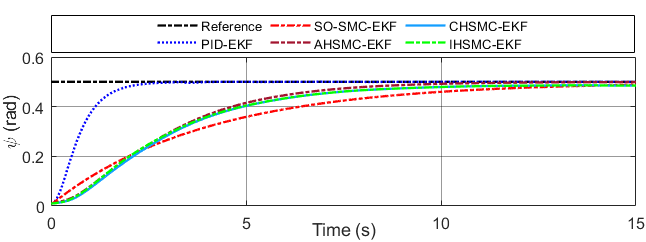}}}\hspace{5pt}
\caption{Yaw angle scenario 1.} 
\label{psisetpoint}
\end{figure}
\begin{figure}[H]
\centering
{\resizebox*{8.5 cm}{!}{\includegraphics{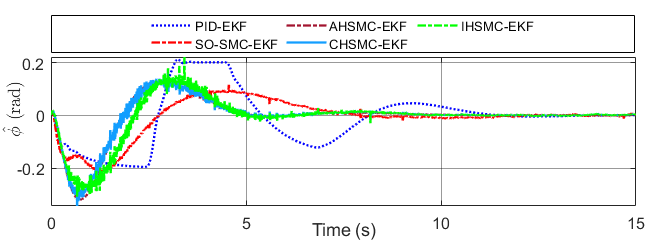}}}\hspace{5pt}
\caption{Estimated roll angle scenario 1.} 
\label{phihsetpoint}
\end{figure}
\begin{figure}[H]
\centering
{\resizebox*{8.5 cm}{!}{\includegraphics{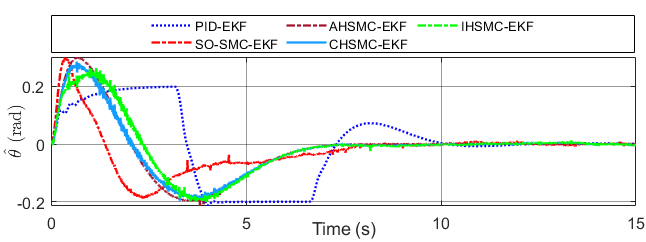}}}\hspace{5pt}
\caption{Estimated pitch angle scenario 1.} 
\label{thetahsetpoint}
\end{figure}
\begin{figure}[H]
\centering
{\resizebox*{8.5 cm}{!}{\includegraphics{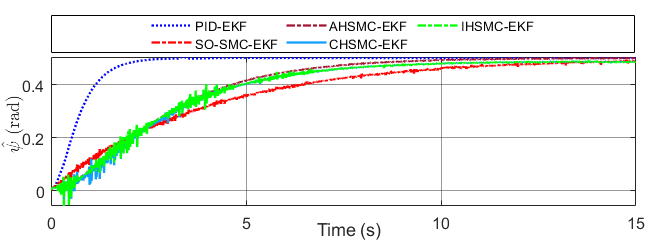}}}\hspace{5pt}
\caption{Estimated yaw angle scenario 1.} 
\label{psihsetpoint}
\end{figure}
\begin{figure}[H]
\centering
{\resizebox*{8.5 cm}{!}{\includegraphics{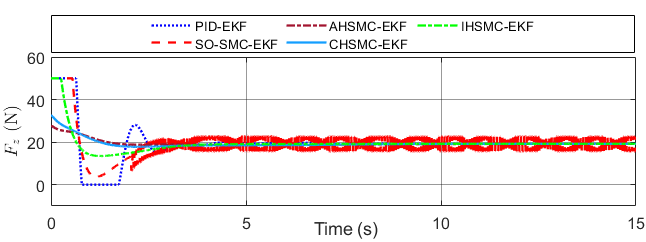}}}\hspace{5pt}
\caption{Control law $F_z$ (N) scenario 1.} 
\label{fzsetpoint}
\end{figure}
\begin{figure}[H]
\centering
{\resizebox*{8.5 cm}{!}{\includegraphics{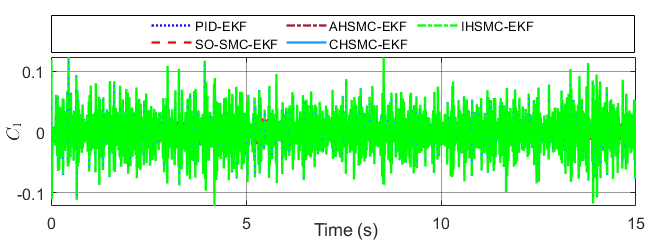}}}\hspace{5pt}
\caption{Control law $C_1$ scenario 1.} 
\label{c1setpoint}
\end{figure}
\begin{figure}[H]
\centering
{\resizebox*{8.5 cm}{!}{\includegraphics{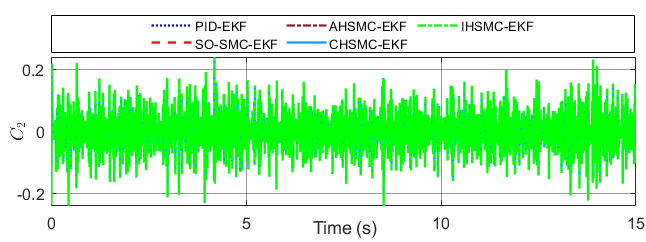}}}\hspace{5pt}
\caption{Control law $C_2$ scenario 1.} 
\label{c2setpoint}
\end{figure}
\begin{figure}[H]
\centering
{\resizebox*{8.5 cm}{!}{\includegraphics{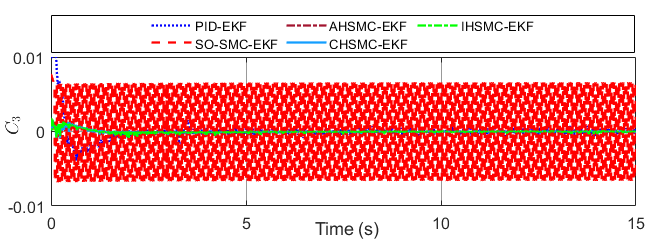}}}\hspace{5pt}
\caption{Control law $C_3$ scenario 1.} 
\label{c3setpoint}
\end{figure}

\subsection{Scenario 2}
The results under scenario 2 are illustrated from  Figure~\ref{xtraject}-Figure~\ref{c3traject}, where the position and its estimation are shown in Figure~\ref{xhtraject}-Figure~\ref{psihtraject}, and the control laws are shown Figure~\ref{fztraject}-Figure~\ref{c3traject}. 
\begin{figure}[H]
\centering
{\resizebox*{8.5 cm}{!}{\includegraphics{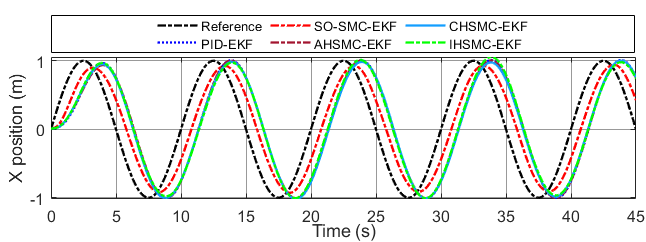}}}\hspace{5pt}
\caption{X-axis tracking trajectory scenario 2.} 
\label{xtraject}
\end{figure}
\begin{figure}[H]
\centering
{\resizebox*{8.5 cm}{!}{\includegraphics{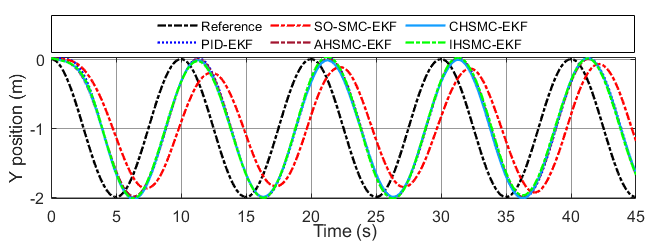}}}\hspace{5pt}
\caption{Y-axis tracking trajectory scenario 2.} 
\label{ytraject}
\end{figure}
\begin{figure}[H]
\centering
{\resizebox*{8.5 cm}{!}{\includegraphics{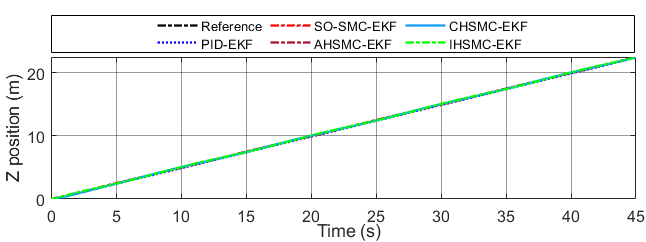}}}\hspace{5pt}
\caption{Z-axis tracking trajectory scenario 2.} 
\label{ztraject}
\end{figure}
\begin{figure}[H]
\centering
{\resizebox*{8.5 cm}{!}{\includegraphics{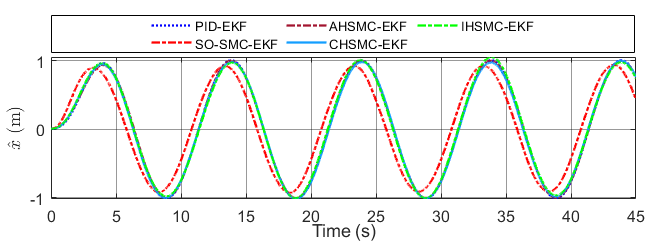}}}\hspace{5pt}
\caption{Estimated X (m) scenario 2.} 
\label{xhtraject}
\end{figure}
\begin{figure}[H]
\centering
{\resizebox*{8.5 cm}{!}{\includegraphics{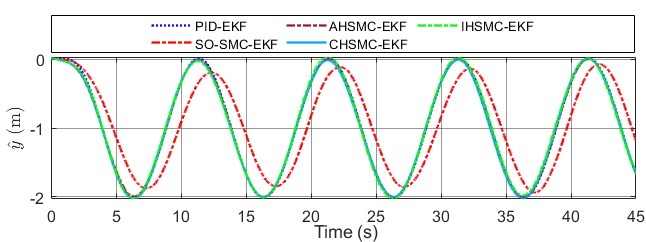}}}\hspace{5pt}
\caption{Estimated Y (m) scenario 2.} 
\label{yhtraject}
\end{figure}
\begin{figure}[H]
\centering
{\resizebox*{8.5 cm}{!}{\includegraphics{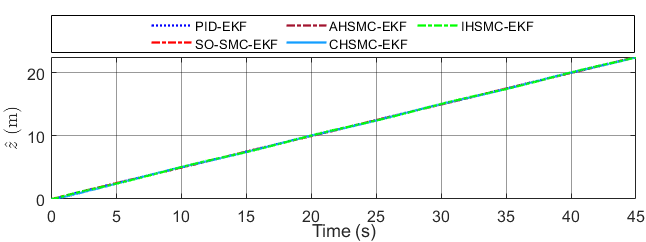}}}\hspace{5pt}
\caption{Estimated Z (m) scenario 2.} 
\label{zhtraject}
\end{figure}
\begin{figure}[H]
\centering
{\resizebox*{8.5 cm}{!}{\includegraphics{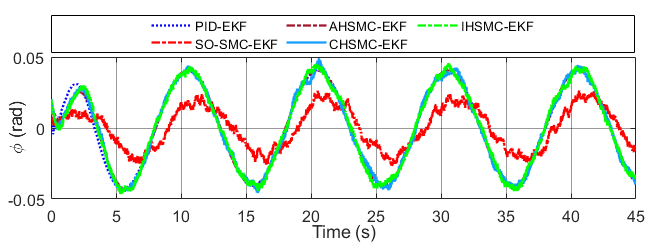}}}\hspace{5pt}
\caption{Roll angle scenario 2.} 
\label{phitraject}
\end{figure}
\begin{figure}[H]
\centering
{\resizebox*{8.5 cm}{!}{\includegraphics{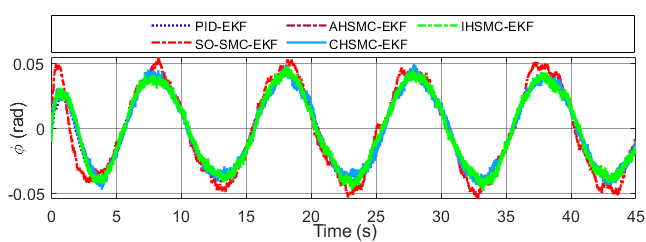}}}\hspace{5pt}
\caption{Pitch angle scenario 2.} 
\label{thetatraject}
\end{figure}
\begin{figure}[H]
\centering
{\resizebox*{8.5 cm}{!}{\includegraphics{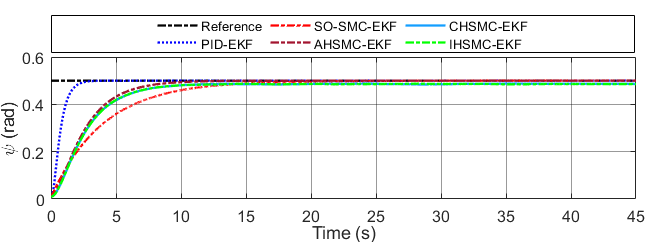}}}\hspace{5pt}
\caption{Yaw angle scenario 2.} 
\label{psitraject}
\end{figure}
\begin{figure}[H]
\centering
{\resizebox*{8.5 cm}{!}{\includegraphics{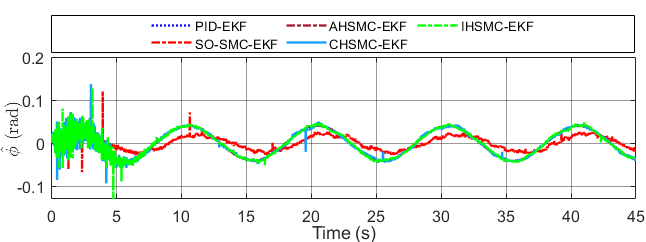}}}\hspace{5pt}
\caption{Estimated roll angle scenario 2.} 
\label{phihtraject}
\end{figure}
\begin{figure}[H]
\centering
{\resizebox*{8.5 cm}{!}{\includegraphics{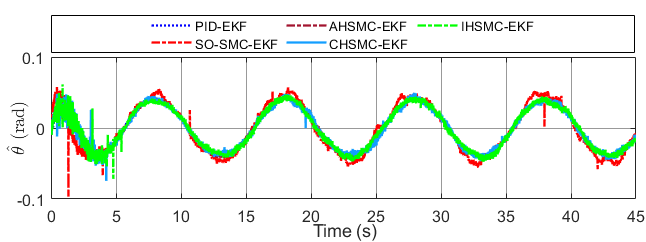}}}\hspace{5pt}
\caption{Estimated pitch angle scenario 2.} 
\label{thetahtraject}
\end{figure}
\begin{figure}[H]
\centering
{\resizebox*{8.5 cm}{!}{\includegraphics{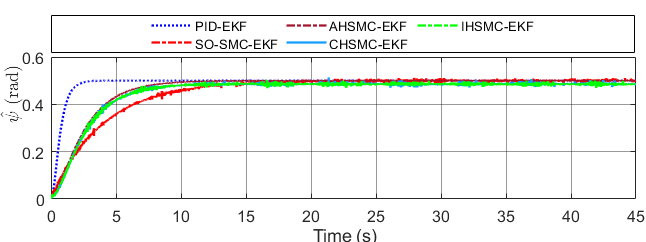}}}\hspace{5pt}
\caption{Estimated yaw angle scenario 2.} 
\label{psihtraject}
\end{figure}
\begin{figure}[H]
\centering
{\resizebox*{8.5 cm}{!}{\includegraphics{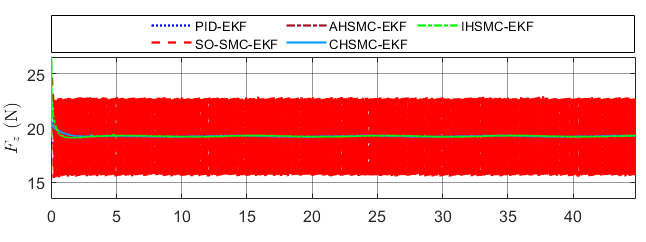}}}\hspace{5pt}
\caption{Control law $F_z$ (N) scenario 2.} 
\label{fztraject}
\end{figure}
\begin{figure}[H]
\centering
{\resizebox*{8.5 cm}{!}{\includegraphics{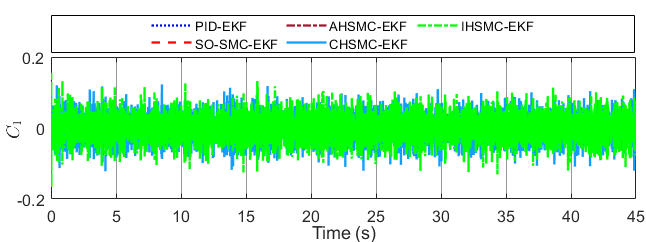}}}\hspace{5pt}
\caption{Control law $C_1$ scenario 2.} 
\label{c1traject}
\end{figure}
\begin{figure}[H]
\centering
{\resizebox*{8.5 cm}{!}{\includegraphics{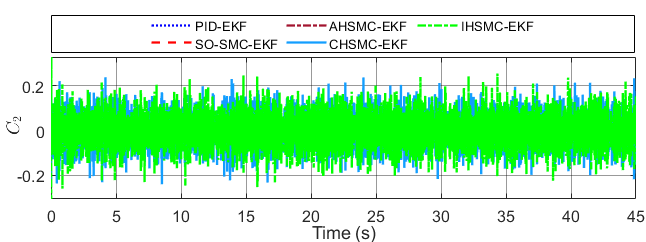}}}\hspace{5pt}
\caption{Control law $C_2$ scenario 2.} 
\label{c2traject}
\end{figure}
\begin{figure}[H]
\centering
{\resizebox*{8.5 cm}{!}{\includegraphics{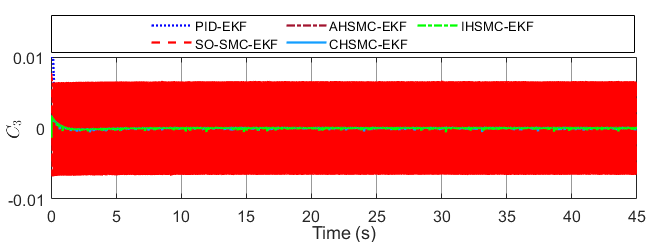}}}\hspace{5pt}
\caption{Control law $C_3$ scenario 2.} 
\label{c3traject}
\end{figure}

\subsection{Scenario 3}
The results under scenario 3 are illustrated in Figure~\ref{xsquare}-Figure~\ref{c3square}, where the position and its estimation are shown in Figure~\ref{xhsquare}-Figure~\ref{psihsquare}, and the control laws are shown Figure~\ref{fzsquare}-Figure~\ref{c3square}. 
\begin{figure}[H]
\centering
{\resizebox*{8.5 cm}{!}{\includegraphics{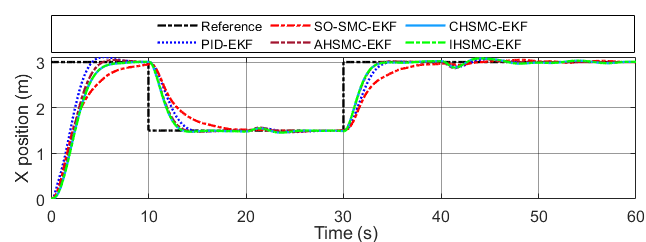}}}\hspace{5pt}
\caption{X-axis tracking trajectory scenario 3.} 
\label{xsquare}
\end{figure}
\begin{figure}[H]
\centering
{\resizebox*{8.5 cm}{!}{\includegraphics{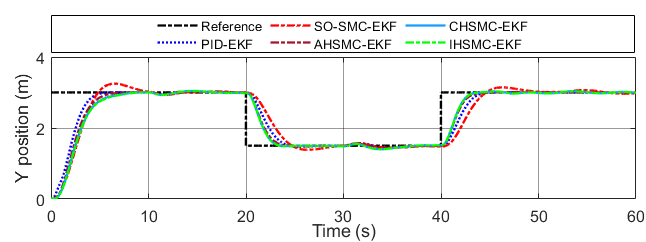}}}\hspace{5pt}
\caption{Y-axis tracking trajectory scenario 3.} 
\label{ysquare}
\end{figure}
\begin{figure}[H]
\centering
{\resizebox*{8.5 cm}{!}{\includegraphics{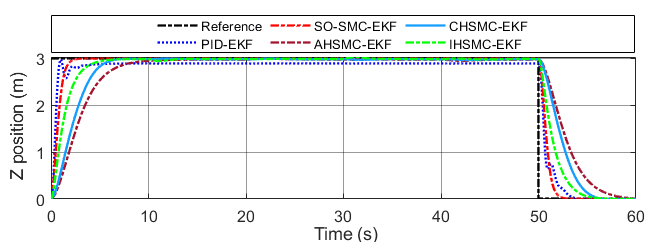}}}\hspace{5pt}
\caption{Z-axis tracking trajectory scenario 3.} 
\label{zsquare}
\end{figure}
\begin{figure}[H]
\centering
{\resizebox*{8.5 cm}{!}{\includegraphics{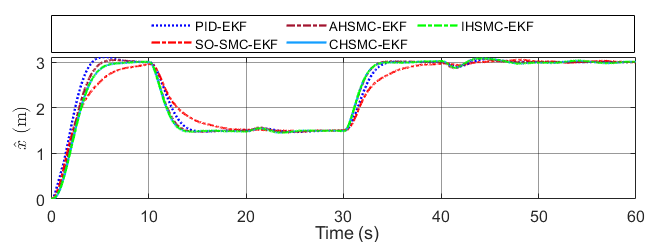}}}\hspace{5pt}
\caption{Estimated X (m) scenario 3.} 
\label{xhsquare}
\end{figure}
\begin{figure}[H]
\centering
{\resizebox*{8.5 cm}{!}{\includegraphics{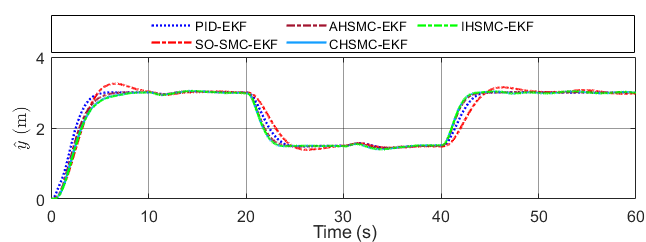}}}\hspace{5pt}
\caption{Estimated Y (m) scenario 3.} 
\label{yhsquare}
\end{figure}
\begin{figure}[H]
\centering
{\resizebox*{8.5 cm}{!}{\includegraphics{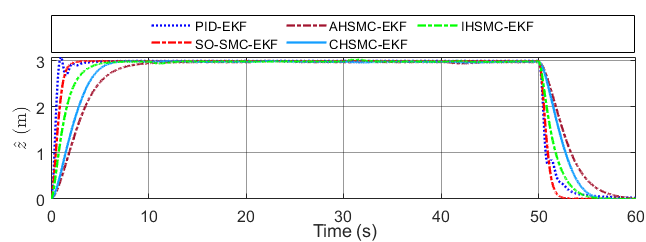}}}\hspace{5pt}
\caption{Estimated Z (m) scenario 3.} 
\label{zhsquare}
\end{figure}
\begin{figure}[H]
\centering
{\resizebox*{8.5 cm}{!}{\includegraphics{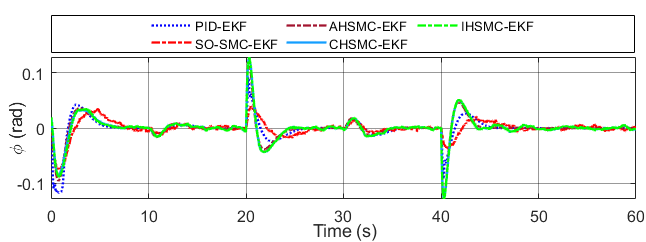}}}\hspace{5pt}
\caption{Roll angle scenario 3.} 
\label{phisquare}
\end{figure}
\begin{figure}[H]
\centering
{\resizebox*{8.5 cm}{!}{\includegraphics{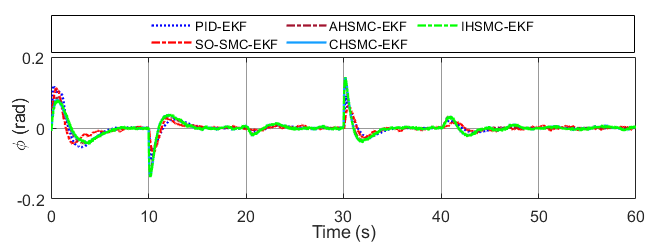}}}\hspace{5pt}
\caption{Pitch angle scenario 3.} 
\label{thetasquare}
\end{figure}
\begin{figure}[H]
\centering
{\resizebox*{8.5 cm}{!}{\includegraphics{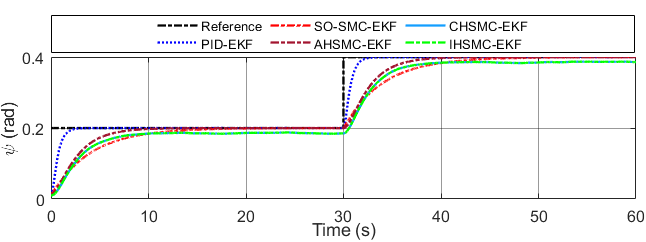}}}\hspace{5pt}
\caption{Yaw angle scenario 3.} 
\label{psisquare}
\end{figure}
\begin{figure}[H]
\centering
{\resizebox*{8.5 cm}{!}{\includegraphics{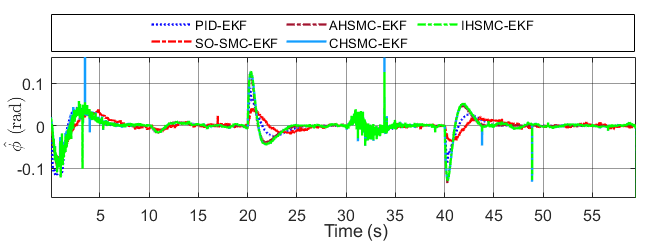}}}\hspace{5pt}
\caption{Estimated roll angle scenario 3.} 
\label{phihsquare}
\end{figure}
\begin{figure}[H]
\centering
{\resizebox*{8.5 cm}{!}{\includegraphics{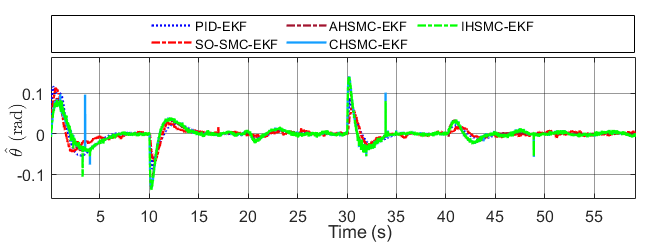}}}\hspace{5pt}
\caption{Estimated pitch angle scenario 3.} 
\label{thetahsquare}
\end{figure}
\begin{figure}[H]
\centering
{\resizebox*{8.5 cm}{!}{\includegraphics{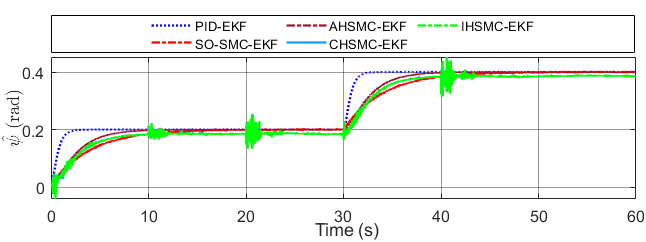}}}\hspace{5pt}
\caption{Estimated yaw angle scenario 3.} 
\label{psihsquare}
\end{figure}
\begin{figure}[H]
\centering
{\resizebox*{8.5 cm}{!}{\includegraphics{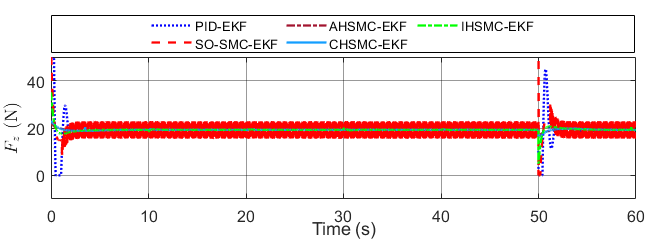}}}\hspace{5pt}
\caption{Control law $F_z$ (N) scenario 3.} 
\label{fzsquare}
\end{figure}
\begin{figure}[H]
\centering
{\resizebox*{8.5 cm}{!}{\includegraphics{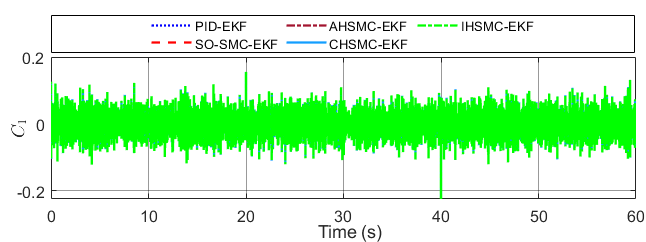}}}\hspace{5pt}
\caption{Control law $C_1$ scenario 3.} 
\label{c1square}
\end{figure}
\begin{figure}[H]
\centering
{\resizebox*{8.5 cm}{!}{\includegraphics{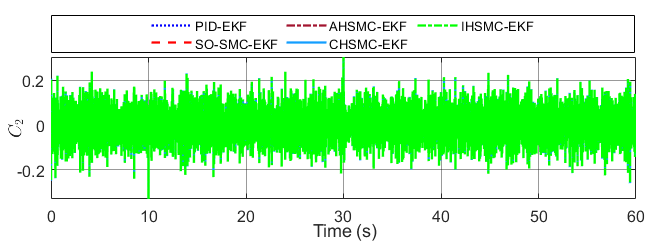}}}\hspace{5pt}
\caption{Control law $C_2$ scenario 3.} 
\label{c2square}
\end{figure}
\begin{figure}[H]
\centering
{\resizebox*{8.5 cm}{!}{\includegraphics{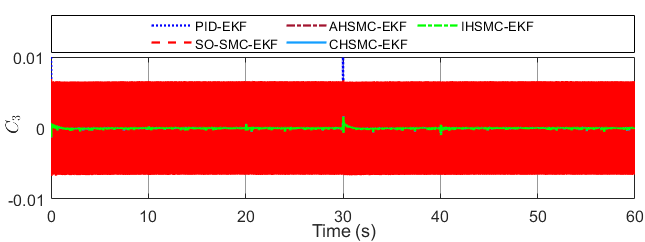}}}\hspace{5pt}
\caption{Control law $C_3$ scenario 3.} 
\label{c3square}
\end{figure}
In general, all the proposed methods can ensure that the quadrotor UAVs track the desired trajectory without compromising their stability, even in the presence of white noise. These tracking abilities and stability show the effectiveness of the EKF and the PD-SMC attitude controller. In terms of position control, it is observed that the  HSMC outperforms both the PID-EKF and SO-SMC-EKF approaches. Notably, the PID-EKF method exhibits position overshoot in Scenario 1, while the SO-SMC method demonstrates a smaller amplitude along the X and Y axes in comparison to the desired trajectory. The HSMC effectively mitigates these issues. Furthermore, in analyzing control forces along the Z-axis, it is found that the proposed HSMC, utilizing the saturation sign function, effectively disregards the influence of switch control laws, thereby ensuring consistent UAV tracking performance. This contrasts with the limitations encountered by the SO-SMC method. The observed fluctuations in control prove the efficacy of the HSMC methodology.
\section{Conclusion}
In this paper, we present a novel approach to control quadrotor UAVs. First, the EKF is proposed to deal with external disturbances and measure noises. Second, a PD-SMC attitude controller is proposed to create the reference roll and pitch angles for the UAVs, and a class of hierarchical sliding mode control is designed to ensure the tracking ability and stability of the UAVs. The main conclusions are summarized as follows: 1. All the estimated state variables by the EKF converge to the real states under the influence of lump disturbances; 2. The tracking ability of the UAVs is ensured in the case of the reference, which is a desired point, a desired trajectory, or reference values suddenly changed in a different moment; 3. The robustness of the designed controller was verified based on the rigorous simulation scenarios and the comparison to other existing methods. In the future,  utilizing the promising Lyapunov function of the HSMC, we will consider the development of an adaptive HSMC to deal with the changes in the UAV's mass or parameters. Furthermore, the controller can be updated to Lyapunov-based model predictive control to further address system constraints for the UAVs, along with control laws $F_z$, $C_1$, $C_2$, and $C_3$.
\appendix
The details of the linearized matrix $\mathbf{A}({\mathbf{X}}_{k},\mathbf{U}_{k})$,$ \mathbf{B}({\mathbf{X}}_{k},\mathbf{U}_{k}) $ are indentifed as:
\begin{align}
   & \mathbf{A}({\mathbf{X}}_{k},\mathbf{U}_{k}) = \begin{pmatrix}
 \mathbf{0}_{6 \times 6} & \mathbf{I}_{6 \times 6} \\
 \mathbf{0}_{6 \times 6} &  \mathbf{A_1}({\mathbf{X}}_{k},\mathbf{U}_{k})
 \end{pmatrix} \\
 & \mathbf{B}({\mathbf{X}}_{k},\mathbf{U}_{k}) = \begin{pmatrix}
 \mathbf{0}_{6 \times 6} \\
 \mathbf{B_1}({\mathbf{X}}_{k},\mathbf{U}_{k}) \\
 \end{pmatrix}
\end{align}
where:
\begin{align}
   & \mathbf{A_1}({\mathbf{X}}_{k},\mathbf{U}_{k}) = \begin{pmatrix}
 \mathbf{A_{11}} & \mathbf{0}_{3 \times 3}\\
 \mathbf{0}_{3 \times 3} &  \mathbf{A_{22}}
 \end{pmatrix}\\
 & \mathbf{B_1}({\mathbf{X}}_{k},\mathbf{U}_{k}) = \begin{pmatrix}
 \mathbf{B_{11}} & \mathbf{0}_{3 \times 3}\\
 \mathbf{0}_{3 \times 1} &  \mathbf{B_{22}}
 \end{pmatrix}
 \end{align}
 with
\begin{align}
   & \mathbf{A_{11}} = \begin{pmatrix}
  -\frac{K_{dx}}{m} & 0 & 0 \\ 
0 & -\frac{K_{dy}}{m}& 0\\ 
0 & 0 & -\frac{K_{dz}}{m}
 \end{pmatrix} \\
 & \mathbf{A_{22}} = \begin{pmatrix}
 0 & \frac{{\dot{\psi}}\,\left({I_{yy}}-{I_{zz}}\right)}{{I_{xx}}} & \frac{{\dot{\theta}}\,\left({I_{yy}}-{I_{zz}}\right)}{{I_{xx}}}\\ 
 \frac{{\dot{\psi}}\,\left({I_{zz}}-{I_{xx}}\right)}{{I_{yy}}} & 0 & \frac{{\dot{\phi}}\,\left({I_{zz}}-{I_{xx}}\right)}{{I_{yy}}}\\ 
\frac{{\dot{\theta}}\,\left({I_{xx}}-{I_{yy}}\right)}{{I_{zz}}} & \frac{{\dot{\phi}}\,\left({I_{xx}}-{I_{yy}}\right)}{{I_{zz}}} & 0
 \end{pmatrix} \\
& \mathbf{B_{11}} = \begin{pmatrix}
 \frac{\sin\left(\phi \right)\,\sin\left(\psi \right)+\cos\left(\phi \right)\,\cos\left(\psi \right)\,\sin\left(\theta \right)}{m} \\ 
\frac{\cos\left(\phi \right)\,\sin\left(\psi \right)\,\sin\left(\theta \right)-\cos\left(\psi \right)\,\sin\left(\phi \right)}{m} \\
\frac{\cos\left(\phi \right)\,\cos\left(\theta \right)}{m}
 \end{pmatrix} \\
 & \mathbf{B_{22}} = \begin{pmatrix}
\frac{1}{{I_{xx}}} & 0 & 0\\ 
0 & \frac{1}{{I_{yy}}} & 0\\ 
0 & 0 & \frac{1}{{I_{zz}}} 
 \end{pmatrix}.
 \end{align}
\bibliographystyle{IEEEtran}
\bibliography{main}

\begin{IEEEbiography}[{\includegraphics[width=1.1in,height=1.1in,clip,keepaspectratio]{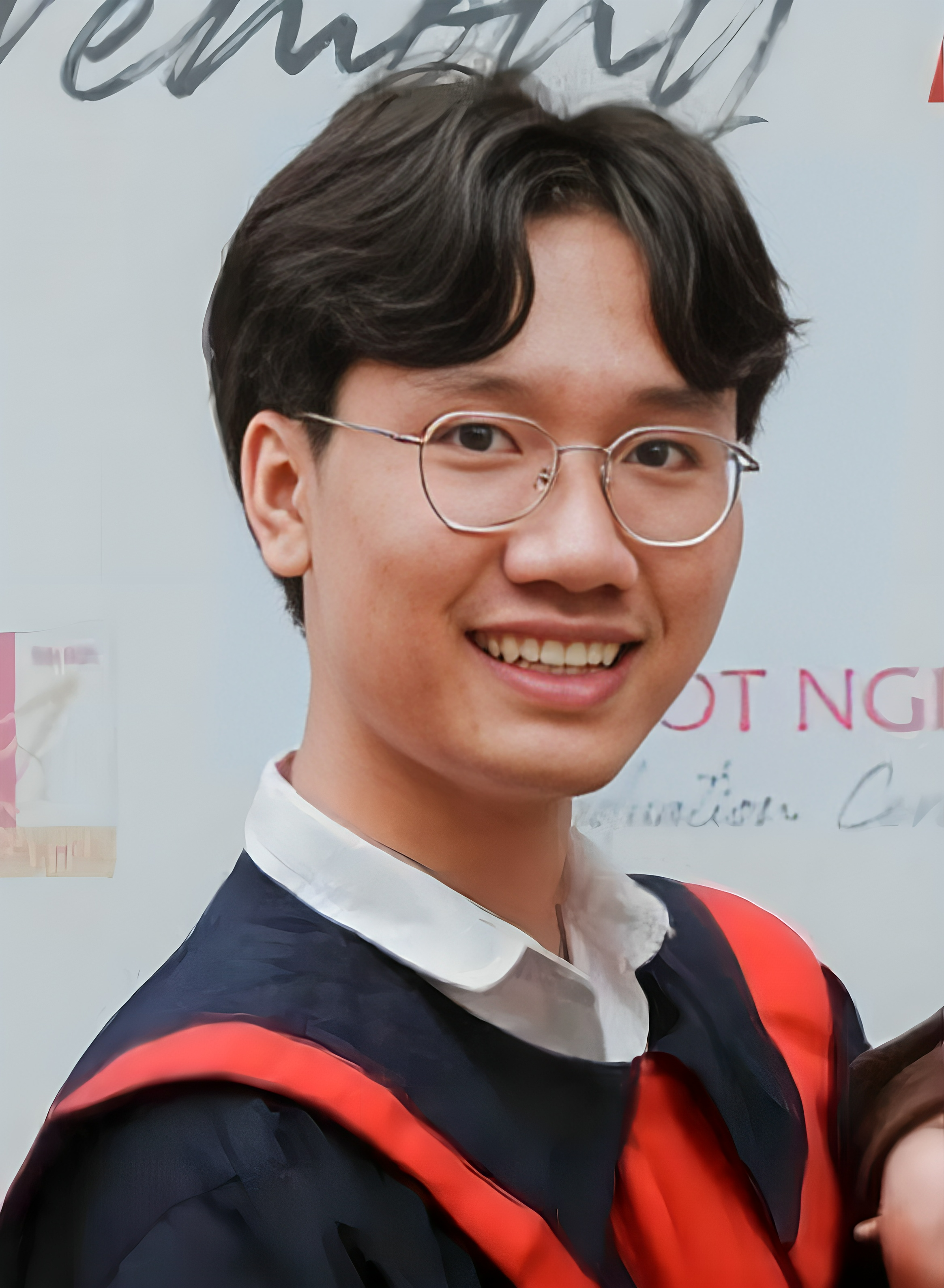}}]{Van Chung Nguyen} (Student Member, IEEE) received
a B.S. degree in Control \& Automation Engineering in 2023 from Hanoi University of Technology, Hanoi, Vietnam.

Mr. Nguyen is currently a research assistant at the Advanced Robotics and Automation (ARA) Lab, and pursuing a Ph.D. degree at the Department of Computer Science and Engineering, University of Nevada, Reno.  His research interests include robotics, control, optimization, and navigation systems.
\end{IEEEbiography}

 \begin{IEEEbiography}[{\includegraphics[width=1in,height=1in,clip,keepaspectratio]{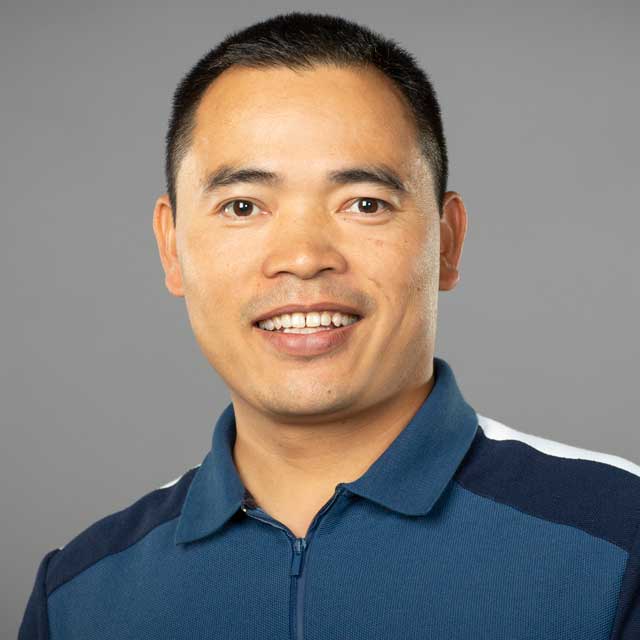}}]
{Hung La} (Senior Member, IEEE) received
a Ph.D. degree in electrical and computer engineering from Oklahoma State University, Stillwater, OK, USA, in 2011.

Dr. La is the Director of the Advanced Robotics and Automation Lab, and an Associate Professor of the Department of Computer Science and Engineering, University of Nevada, Reno, NV, USA. From 2011 to 2014, he was a Post-Doctoral research fellow and then a Research Faculty Member at the Center for Advanced Infrastructure and Transportation, Rutgers University, Piscataway, NJ, USA. He has authored over 160 papers published in major journals, book chapters, and international conference proceedings. His current research interests include robotics, deep learning, and AI.
Dr. La is the recipient of the 2019 NSF CAREER award, and the 2014 ASCE Charles Pankow Award for the Robotics Assisted Bridge Inspection Tool (RABIT), 12 best paper awards/finalists, and the best presentation award in international conferences (e.g., ICRA, IROS, ACC, SSRR). Dr. La is an Associate Editor of the IEEE International Conference on Robotics and Automation (ICRA), and the IEEE International Conference on Systems, Man, and Cybernetics (SMC). He is also an Associate Editor of several journals including IEEE Transactions on Human-Machine Systems, Robotica, and Frontier in Robotics \& AI. He was the Guest Editor of the International Journal of Robust and Nonlinear Control. 

\end{IEEEbiography}

\end{document}